\documentclass[a4paper,10pt,twocolumn,preprint]{elsarticle}


 \usepackage{hyperref}
 \usepackage{amsmath}
 \usepackage{amssymb}
 \usepackage{tikz}
 \renewcommand{\dim}{\mathrm{dim}}
 \renewcommand{\ker}{\mathrm{ker}}
 \DeclareMathOperator{\Ima}{Im}
 \newcommand{\coker}{\mathrm{coker}}
 \DeclareMathOperator{\link}{link}
 \DeclareMathOperator{\sta}{star}

 \newdefinition{definition}{Definition}
 \newtheorem{remark}[definition]{Remark}
 \newtheorem{theorem}{Theorem}
 \newtheorem{lemma}[theorem]{Lemma}
 \newtheorem{corollary}[theorem]{Corollary}
 \newtheorem{proposition}[theorem]{Proposition}
 \newproof{proof}{Proof}

 \usepackage{geometry}
  \geometry{
  a4paper,
  total={170mm,257mm},
  left=12mm,
  right=12mm,
  top=20mm,
  }

\begin{document}

\twocolumn[{
\begin{frontmatter}

\title{Critical sets of PL and discrete Morse theory: a correspondence}

\author[mymainaddress]{Ulderico Fugacci}

\author[mysecondaryaddress]{Claudia Landi}

\author[mythirdaddress]{Hanife Varl{\i}}

\address[mymainaddress]{Polytechnic University of Torino, Torino, Italy}
\address[mysecondaryaddress]{University of Modena and Reggio Emilia, Modena, Italy}
\address[mythirdaddress]{\c{C}ank{\i}r{\i} Karatekin University, \c{C}ank{\i}r{\i}, Turkey}

\begin{abstract}
Piecewise-linear (PL) Morse theory and discrete Morse theory are used in shape analysis tasks to investigate the topological features of discretized spaces.
In spite of their common origin in smooth Morse theory, various notions of critical points have been given in the literature for the discrete setting, making a clear understanding of the relationships occurring between them not obvious. This paper aims at providing equivalence results about critical points of the two discretized Morse theories.
First of all, we prove the equivalence of the existing notions of PL critical points. Next, under an optimality condition called relative perfectness, we show a dimension agnostic correspondence between the set of PL critical points and that of discrete critical simplices of the combinatorial approach. Finally, we show how a relatively perfect discrete gradient vector field can be algorithmically built up to dimension 3. This way, we guarantee a formal and operative connection between critical sets in the PL and discrete theories.
\end{abstract}

\begin{keyword}
Critical Point,  Gradient Vector Field, Relative Perfectness.
\end{keyword}


\end{frontmatter}
}]


\section{Introduction}
{\em Topological shape analysis} is useful to extract information about  topological and  morphological properties of a shape, naturally finding  applications in  fields that require shape understanding such as  computer graphics, computer vision and visualization. \cite{Biasotti-et-al2008} shows that most methods of topological shape analysis are grounded in {\em Morse theory} \cite{Milnor63}, as they investigate critical sets of  functions.

The proved effectiveness of Morse theory has led to the development of several discrete counterparts of this theory  useful when one works with shapes discretized as cell complexes, in particular  simplicial complexes \cite{DeFloriani15}. Among them, two discretized versions of Morse theory have gained a prominent role in the literature: the {\em piecewise-linear (PL) Morse theory} introduced by Banchoff \cite{Banchoff67} and the {\em discrete Morse theory} developed by Forman \cite{Forman98}.

Both the approaches are worth to be addressed as Morse theories on the ground that they satisfy discrete versions of the main theorem valid in the smooth case: topological changes  of a shape occur at the  critical points of a function defined on it (or, equivalently, at singularities of a gradient field).

In spite of these similarities, several aspects distinguish the two theories.
First of all, PL Morse theory is more centered on functions whereas discrete Morse theory is more based on gradient vector fields. Indeed,  a PL function is uniquely defined on each point of a polyhedron by linearly interpolation of  scalar values given at the vertices. In contrast, the  approach by Forman is combinatorial in that it treats simplices as a whole rather than as a set of points, and  can produce a collection of simplex pairs  to simulate the behaviour of a function gradient independently of whether or not a scalar function is globally defined.


\begin{figure*}[!htb]
	\centering
	\begin{tabular}{cccc}
    \includegraphics[width=.20\textwidth]{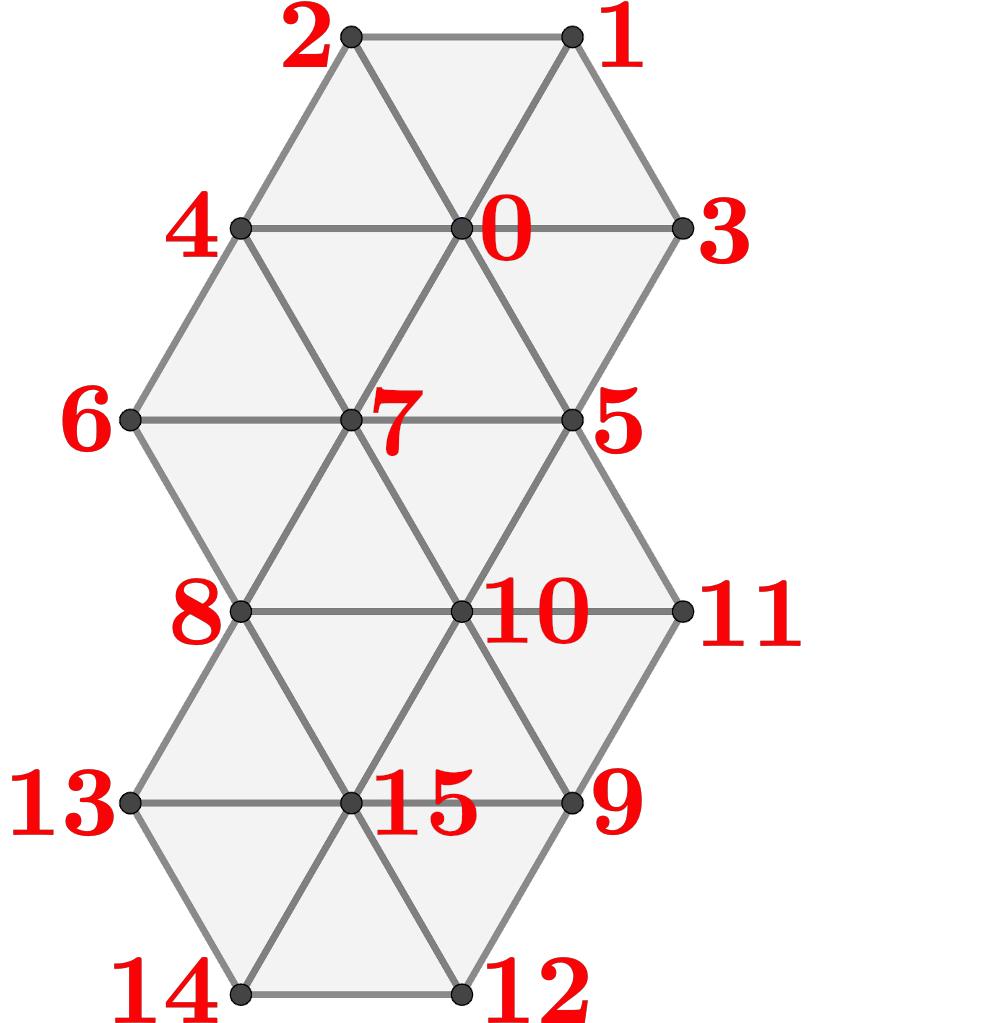} & \includegraphics[width=.24\textwidth]{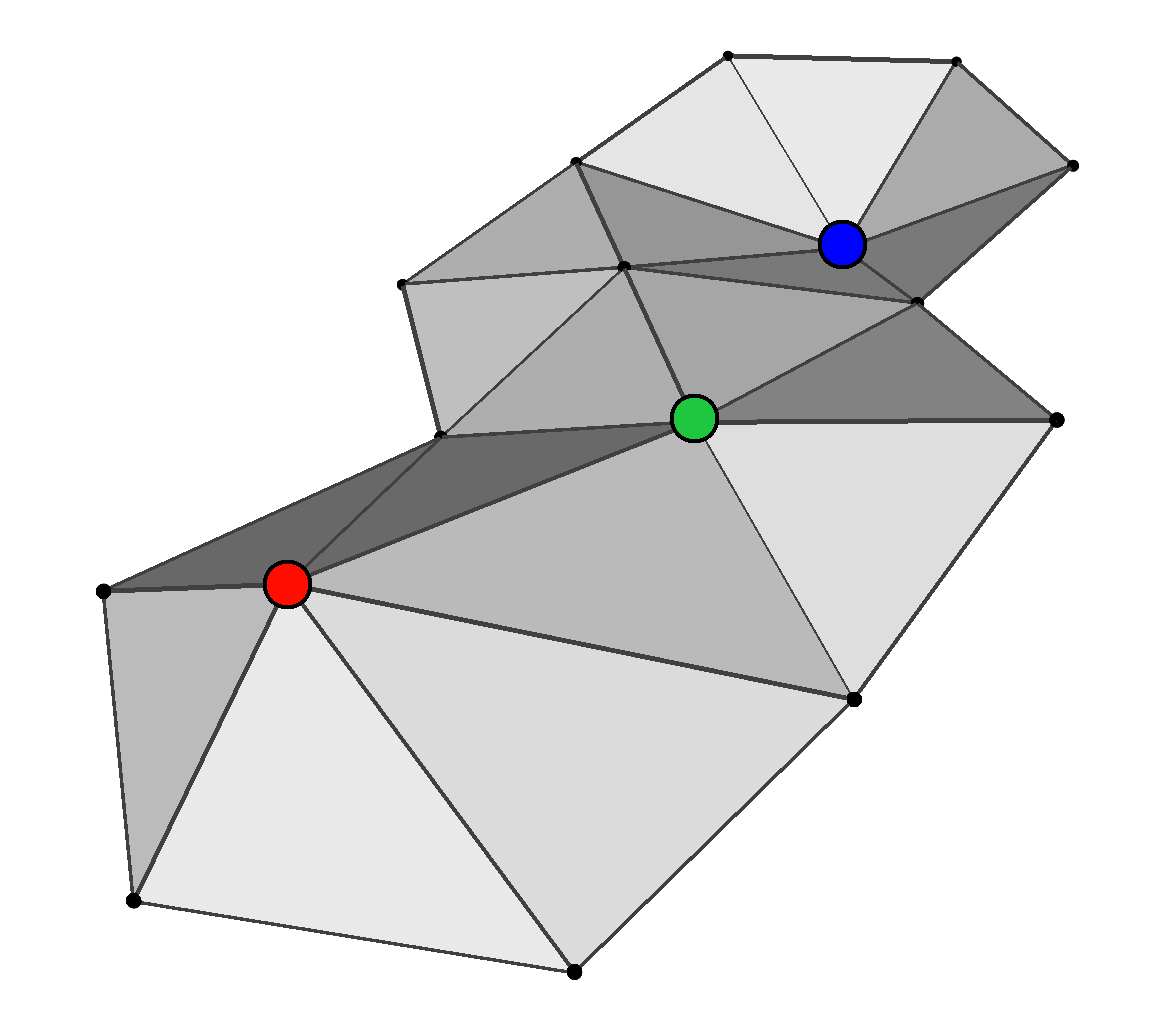} & \includegraphics[width=.24\textwidth]{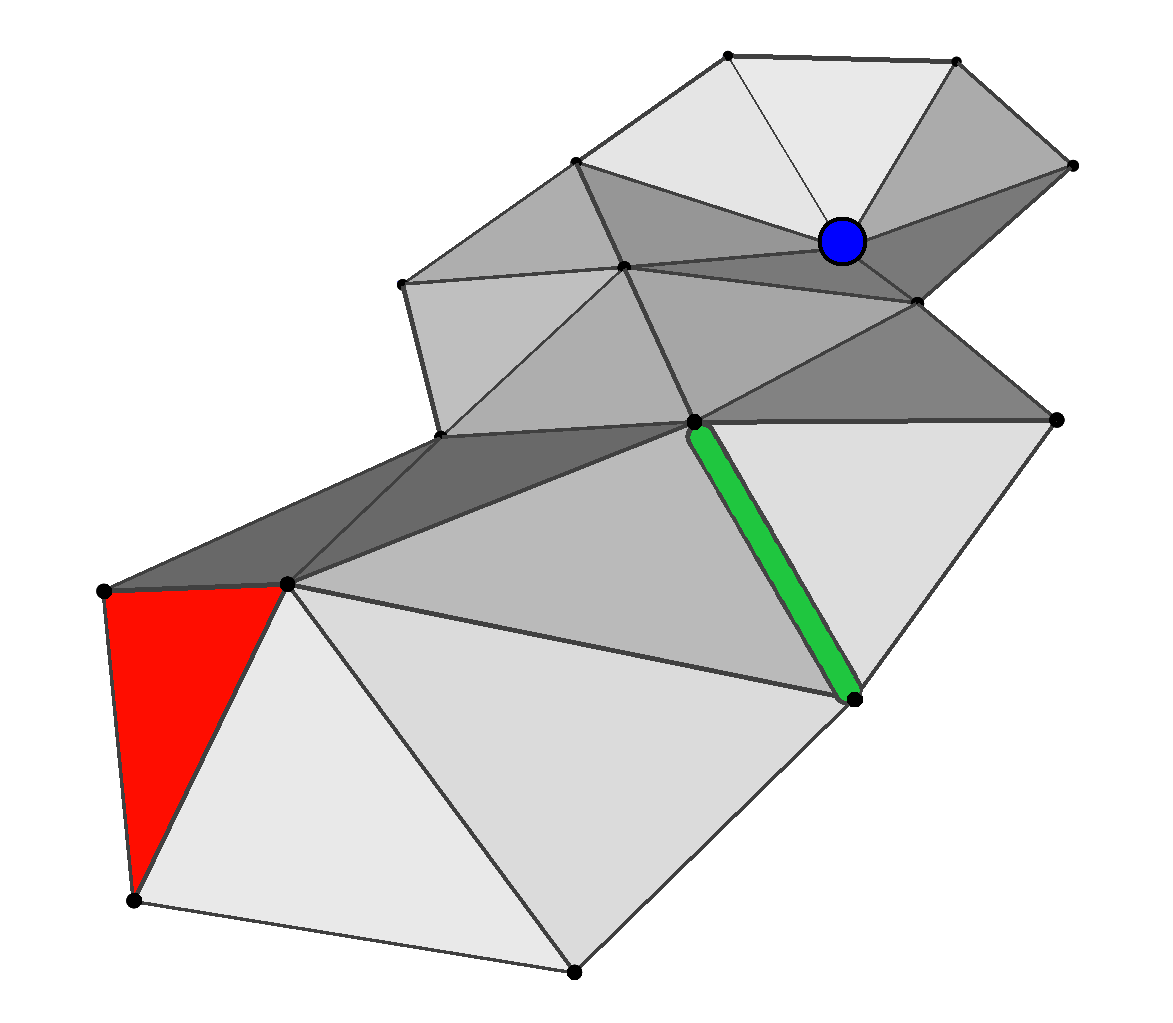} &
		\includegraphics[width=.24\textwidth]{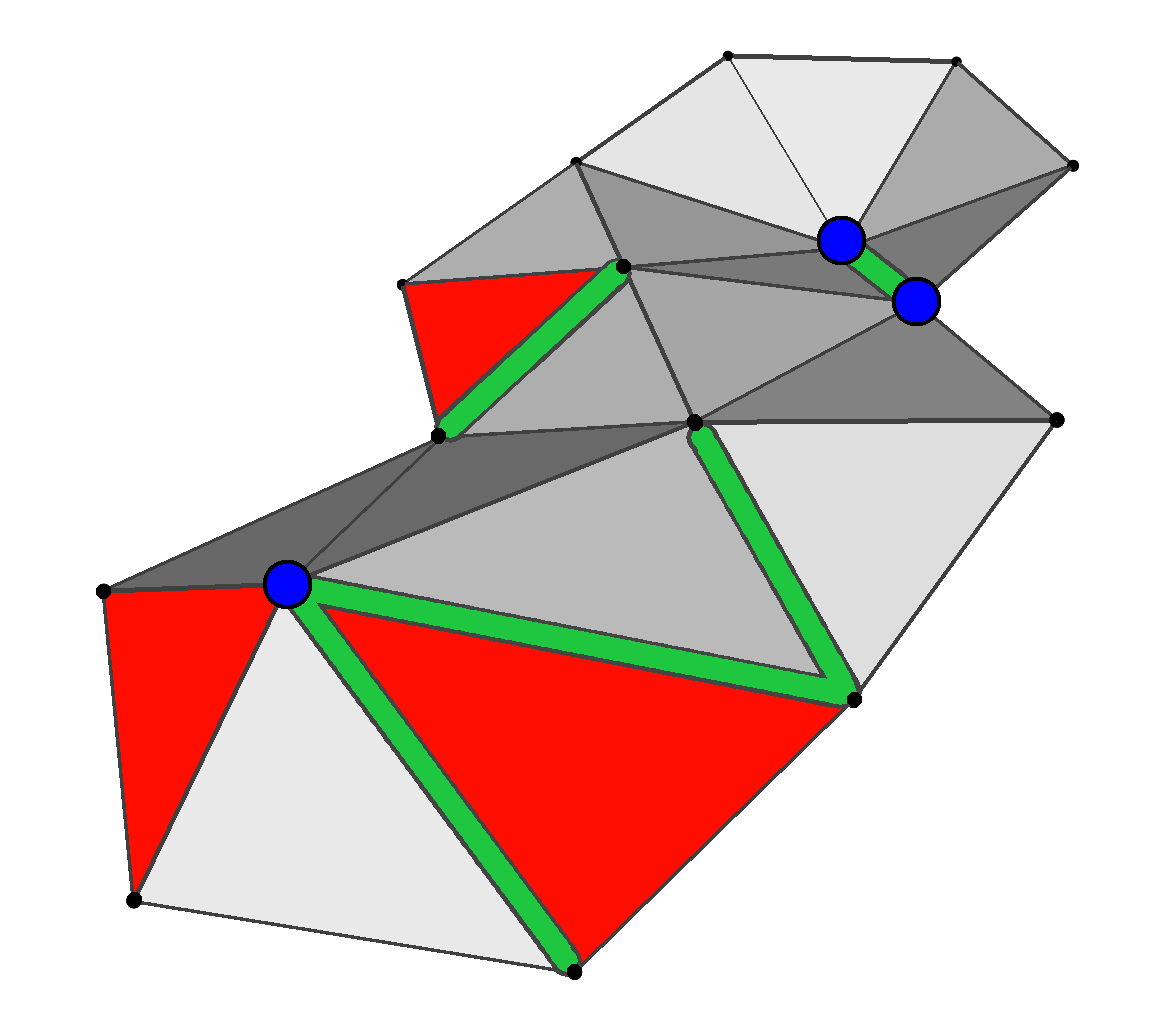} \\
    $(a)\quad\quad\quad$ & $(b)$ & $(c)$ & $(d)$
	\end{tabular}
	\caption{$(a)$ A portion of a 2-dimensional simplicial complex $\Sigma$ endowed with an injective scalar function $f:\Sigma_0 \rightarrow \mathbb{R}$ defined on its vertices $\Sigma_0$. In $(b)-(d)$, $\Sigma$ is depicted as a surface in $\mathbb{R}^3$ by considering the piecewise-linear interpolation of $f$ as a height function defined on all the points of the underlying space of $\Sigma$. In $(b)$, marked vertices depict the PL critical points of $f$ (maxima in red, saddles in green, minima in blue). In $(c)$ and $(d)$, marked simplices depict the discrete critical simplices of two different gradient vector fields (as before, maxima in red, saddles in green, minima in blue).}
	\label{fig:intro}
\end{figure*}


Because of these different standpoints,  the notions of critical points the two approaches prompt are quite different. In the PL case, the various definitions given in the literature for critical points of a PL function on a discretized manifold domain always point to vertices and reflect our common intuition of features like a minimum, a maximum, or a saddle point. In contrast, in  discrete Morse theory, critical points do not consist of just vertices but, more generally, of simplices of any dimension. This way, at a visual level, discrete Morse theory loses its ties with the smooth theory.  On the other hand,  discrete Morse theory is recently gaining much more visibility than the PL one thanks to its combinatorial nature and to its capability in dealing with arbitrary domains rather than only manifolds.

We think that for these  reasons  it is not obvious how to relate and interpret the critical sets obtained by the PL and the combinatorial approaches.
An example of that is depicted in Figure \ref{fig:intro}. Let us consider the 2-dimensional simplicial complex $\Sigma$ endowed with an injective scalar function $f$ defined on its vertices represented in Figure \ref{fig:intro}$(a)$. As shown in Figure \ref{fig:intro}$(b)$, $\Sigma$ can be depicted as a surface in $\mathbb{R}^3$ by considering the piecewise-linear interpolation of $f$ as a height function defined on all the points of the underlying space of $\Sigma$. In accordance with our intuition, the PL critical points of $f$ are localized in the peaks, the saddles and the troughs of the considered region of $\Sigma$. The same is not true in general for the discrete critical simplices of a gradient vector field $V$. For instance, while the discrete critical simplices depicted in Figure \ref{fig:intro}$(c)$ are in 1-to-1 correspondence with the PL critical points of $f$ (moreover, maximum/saddle/minimum PL critical points are in correspondence with maximum/saddle/minimum discrete critical simplices, respectively) and closely located to them (i.e., each PL critical point $v$ is necessarily a vertex of the corresponding discrete critical simplex $\sigma$), such a correspondence is not realized by the discrete critical simplices depicted in Figure \ref{fig:intro}$(d)$.
The described issue represents a real obstruction for experts to exploit the full potentialities of combining the two theories, and for practitioners  to knowingly adopt  in their application domains one or the other of the two theories. The aim of this paper is to unveil such a relation.

\paragraph{Contributions}
Firstly, in Section \ref{sec:critical-definition}, we consider the different definitions of PL critical points given in the literature, starting from the original definition given in \cite{Banchoff67} that has been later specialized or generalized according to specific tasks and working dimensions.
In spite of the intuitive analogy between them, to the best of our knowledge, a formal proof that the proposed definitions are equivalent has never been given, and we  make up for this gap in Section \ref{sec:critical-equivalence}.

Secondly, in Section \ref{sec:critical-correspondence}, we  turn our attention to discrete Morse theory and introduce the notion of  relative perfectness for a discrete gradient vector field $V$ with respect to a given function $f$. It amounts to require that the number of discrete critical simplices of $V$  coincides with the number of topological changes  occurring along  a sublevel set  filtration by $f$. We show the utility of this notion  in Section \ref{subsec:critical-correspondence} by proving that, for relatively perfect discrete gradient vector fields there is  a well-defined  correspondence between its discrete critical simplices and the PL critical points of $f$. In particular, such a correspondence is a bijection for PL Morse functions. Moreover, it ensures that each discrete critical simplex $\sigma$ and its corresponding PL critical point $v$ are closely located:  each PL critical point $v$ is necessarily a vertex of the corresponding discrete critical simplex $\sigma$.

  As a third and final contribution, in Section \ref{subsec:existence} we prove that, for a combinatorial manifold $\Sigma$ of dimension lower than or equal to 3 endowed with a function $f$ defined on the vertices of $\Sigma$, it is always possible to build a discrete gradient vector field $V$ relatively perfect with respect to $f$. The proof exhibits an algorithmic strategy to build such a gradient and, combined with our second contribution, ensures that a correspondence between the critical sets obtained by the PL and the discrete approach is always established for low dimensions.

\paragraph{Related works}
Although the interest in an explicit understanding of the connections between the PL and discrete Morse theories would be most natural, very few papers in the literature address this topic.
The work by Lewiner \cite{Lewiner13} is, to the best of our knowledge, the only work that  deals with it. In his work, the author  adopts a greedy algorithm for the construction of a discrete gradient vector field \cite{Lewiner03} to build, after a sequence of barycentric subdivisions, an adjacent discrete critical simplex for each PL critical point. Even if this represents a first encouraging result, the obtained correspondence is affected by some serious constraints. Firstly, it is limited to simplicial complexes up to dimension 2. Secondly, the entire approach is available only for a specific algorithm. Finally,  the need of  a sequence of barycentric subdivisions to obtain the desired correspondence, with the consequent rapid increase in the number of simplices, is not desirable.

Related to the problems here addressed is also the work by Benedetti \cite{Benedetti12}. Benedetti proves that, taking  a Morse vector to list the number of critical points in each dimension, if a smooth manifold $M$ admits a Morse vector $\mathbf c$, then for any PL triangulation $\Sigma$ of $M$ there exists a finite number of barycentric subdivisions of $\Sigma$ such that the obtained triangulation admits $\mathbf c$ as a discrete Morse vector.

From different perspectives, both Benedetti \cite{Benedetti12} and Lewiner et al. \cite{Lewiner03}  are also interested in perfect functions for which the number of critical points is the minimal one allowed by the homology of the manifold. In applications,  functions correspond to measurements and cannot be chosen, so perfectness is scarcely interesting. Instead, it may be useful   to achieve  the minimal number of critical points ensuring the same persistent homology as the given function. In particular, Robins et al. in \cite{Robins11} show that such optimality is achievable for cubical complexes up to dimension 3. In the present paper, we rephrase this kind of optimality in terms of relative homology, hence calling it {\em relative perfecteness},  and prove that it is also achievable for simplicial complexes up to dimension 3, improving \cite{Landi2019}.

\section{Basic notions in PL and discrete Morse theory}\label{sec:critical-definition}
In this section, we briefly introduce the required background notions on PL and discrete Morse theory with a special emphasis on the definitions of critical points and simplices provided in the literature.

\paragraph{Notations and working hypothesis}
From now on, we will adopt the following notations that will also describe the common framework in which the different versions of the PL Morse theory are settled.
Given a simplicial complex $\Sigma$, we denote by $|\Sigma|$ the underlying space of $\Sigma$, also known as the polytope of $\Sigma$ (i.e., the geometric realization of $\Sigma$ as a subspace of the Euclidean space $\mathbb{R}^n$ where it is embedded).
Hereafter, depending on the domain on which it is applied, the symbol $H_*$ represents  singular or simplicial homology with coefficients in a field. By $\beta_i$,  we denote the rank of the $i^{th}$ homology group $H_i$. Analogously, $\tilde \beta_i$ denotes the rank of the $i^{th}$ reduced homology group $\tilde H_i$.

We assume that an injective scalar function $f:\Sigma_0 \rightarrow \mathbb{R}$ is given on the set $\Sigma_0$ of vertices of $\Sigma$.
It can be extended to two functions: a piecewise-linear function $f_{PL}:|\Sigma| \rightarrow \mathbb{R}$ defined by linear interpolation for all the points of $|\Sigma|$, and a function $f_{max}:\Sigma \rightarrow \mathbb{R}$ defined
by mapping each simplex $\sigma \in \Sigma$ to $\max\{ f(v) \,|\, \text{ vertex } v \text{ face of } \sigma\}$.
Thanks to such function $f_{max}$, it is possible to filter $\Sigma$ through a collection of sublevel sets where, for $l\in\mathbb{R}$, the $l$-sublevel set of $\Sigma$ w.r.t. $f_{max}$ is the simplicial complex $\Sigma^l:=\{ \sigma\in \Sigma \,|\, f_{max}(\sigma)\leq l \}$.

%

Given a simplex $\sigma$ of $\Sigma$, the star and the link of $\sigma$ represent combinatorial counterparts of an open neighbourhood and of its boundary, respectively. Formally, the star of a simplex $\sigma$ of $\Sigma$, $\sta(\sigma)$, is defined as the collection of the cofaces of $\sigma$, while the link of a simplex $\sigma$, $\link(\sigma)$, consists of the collection of the simplices of $\Sigma$ that are faces of an element in $\sta(\sigma)$ which do not intersect $\sigma$.

The function $f_{max}$ allows to define the lower star of a simplex $\sigma$ of $\Sigma$, $\sta^{-}(\sigma)$, as the subset of $\sta(\sigma)$ on which the function $f_{max}$ takes values not greater than $f_{max}(\sigma)$. I.e., $\sta^{-}(\sigma):=\sta(\sigma)\cap \Sigma^{f_{max}(\sigma)}$. Similarly, one can define the lower link of $\sigma$, $\link^{-}(\sigma)$, as the intersection $\link(\sigma)\cap \Sigma^{f_{max}(\sigma)}$. The closure under the face relation of some collection $S$ of simplices (such as the star of a vertex) is denoted by $\overline{S}$ and is the smallest simplicial subcomplex of $\Sigma$ containing  $S$.

\begin{figure*}[ht]
  \centering
  \resizebox{1.0\textwidth}{!}{%
    \hspace{2.5cm}	\begin{tikzpicture}
    \draw[fill=black!20] (0,0)--(0,1)--(1,0.5);
    \draw[fill=black!20] (1,2)--(0,1)--(1,0.5);
    \draw[fill=black!20] (1,2)--(2,1)--(1,0.5);
    \draw[fill=black!20] (2,0)--(2,1)--(1,0.5);
    \draw[fill=black!20] (2,0)--(1,-1)--(1,0.5);
    \draw[fill=black!20] (0,0)--(1,-1)--(1,0.5);
    \draw (0,0)--(0,1);
    \draw (0,1)--(1,2);
    \draw (1,2)--(2,1);
    \draw (2,1) -- (2,0);
    \draw (2,0) -- (1,-1);
    \draw (1,-1) -- (0,0);
    \draw (1,0.5) node[circle,fill,inner sep=1.5pt] {};
    \draw (0,0) node[circle,fill,inner sep=1.5pt] {};
    \node[inner sep=0,anchor=west,text width=3.3cm] (note1) at (-0.3,0) {$2$};
    \draw (0,1) node[circle,fill,inner sep=1.5pt] {};
    \node[inner sep=0,anchor=west,text width=3.3cm] (note1) at (-0.3,1) {$1$};
    \draw (1,2) node[circle,fill,inner sep=1.5pt] {};
    \node[inner sep=0,anchor=west,text width=3.3cm] (note1) at (0.7,2) {$7$};
    \draw (2,1) node[circle,fill,inner sep=1.5pt] {};
    \node[inner sep=0,anchor=west,text width=3.3cm] (note1) at (2.3,1) {$3$};
    \draw (2,0) node[circle,fill,inner sep=1.5pt] {};
    \node[inner sep=0,anchor=west,text width=3.3cm] (note1) at (2.3,0) {$8$};
    \draw (1,-1) node[circle,fill,inner sep=1.5pt] {};
    \node[inner sep=0,anchor=west,text width=3.3cm] (note1) at (0.7,-1) {$6$};
    \node[inner sep=0,anchor=west,text width=3.3cm] (note1) at (1.3,0.5) {$5$};
    \draw (1,0.5) -- (0,0);
    \draw (1,0.5) -- (0,1);
    \draw (1,0.5) -- (1,2);
    \draw (1,0.5) -- (2,1);
    \draw (1,0.5) -- (2,0);
    \draw (1,0.5) -- (1,-1);
    \node[inner sep=0,anchor=west,text width=3.3cm] (note1) at (0.8,-1.5) {$(a)$};
    \end{tikzpicture}
    \hspace{-2cm}
    \quad
    \begin{tikzpicture}
    \draw[fill=black!20] (0,0)--(0,1)--(1,0.5);
    \draw[red] (0,0)--(0,1);
    \draw[dotted] (0,1)--(1,2);
    \draw[dotted] (1,2)--(2,1);
    \draw[dotted] (2,1) -- (2,0);
    \draw [dotted](2,0) -- (1,-1);
    \draw[dotted] (1,-1) -- (0,0);
    \draw (1,0.5) node[circle,fill,inner sep=1.5pt] {};

    \node[inner sep=0,anchor=west,text width=3.3cm] (note1) at (-0.3,0) {$2$};
    \draw (0,1) node[circle,fill,red,inner sep=1.5pt] {};
    \node[inner sep=0,anchor=west,text width=3.3cm] (note1) at (-0.3,1) {$1$};

    \node[inner sep=0,anchor=west,text width=3.3cm] (note1) at (0.7,2) {$7$};

    \node[inner sep=0,anchor=west,text width=3.3cm] (note1) at (2.3,1) {$3$};

    \node[inner sep=0,anchor=west,text width=3.3cm] (note1) at (2.3,0) {$8$};

    \node[inner sep=0,anchor=west,text width=3.3cm] (note1) at (0.7,-1) {$6$};
    \node[inner sep=0,anchor=west,text width=3.3cm] (note1) at (1.3,0.5) {$5$};
    \draw (1,0.5) -- (0,0);
    \draw (1,0.5) -- (0,1);
    \draw [dotted](1,0.5) -- (1,2);
    \draw (1,0.5) -- (2,1);
    \draw[dotted] (1,0.5) -- (2,0);
    \draw[dotted] (1,0.5) -- (1,-1);
    \node[inner sep=0,anchor=west,text width=3.3cm] (note1) at (0.8,-1.5) {$(b)$};
    \draw (0,0) node[circle,fill,red,inner sep=1.5pt] {};
    \draw (0,1) node[circle,fill,red,inner sep=1.5pt] {};
    \draw (2,1) node[circle,fill,red,inner sep=1.5pt] {};
    \draw (1,2) node[circle,fill, lightgray, inner sep=1.5pt] {};
    \draw (2,0) node[circle,fill, lightgray,inner sep=1.5pt] {};
    \draw (1,-1) node[circle,fill, lightgray,inner sep=1.5pt] {};
    \end{tikzpicture}
    \hspace{-2cm}
    \quad
    \begin{tikzpicture}
    \draw[fill=black!20] (1,2)--(0,1)--(1,0.5);
    \draw[fill=black!20] (1,2)--(2,1)--(1,0.5);
    \draw[fill=black!20] (2,0)--(2,1)--(1,0.5);
    \draw[fill=black!20] (0,0)--(1,-1)--(1,0.5);
    \draw [dotted](0,0)--(0,1);
    \draw (0,1)--(1,2);
    \draw (1,2)--(2,1);
    \draw(2,1) -- (2,0);
    \draw [dotted](2,0) -- (1,-1);
    \draw(1,-1) -- (0,0);
    \draw (1,0.5) node[circle,fill,inner sep=1.5pt] {};
    \draw (0,0) node[circle,fill,inner sep=1.5pt] {};
    \node[inner sep=0,anchor=west,text width=3.3cm] (note1) at (-0.3,0) {$2$};
    \draw (0,1) node[circle,fill,inner sep=1.5pt] {};
    \node[inner sep=0,anchor=west,text width=3.3cm] (note1) at (-0.3,1) {$1$};
    \draw (1,2) node[circle,fill, inner sep=1.5pt] {};
    \node[inner sep=0,anchor=west,text width=3.3cm] (note1) at (0.7,2) {$7$};
    \draw (2,1) node[circle,fill,inner sep=1.5pt] {};
    \node[inner sep=0,anchor=west,text width=3.3cm] (note1) at (2.3,1) {$3$};
    \draw (2,0) node[circle,fill,inner sep=1.5pt] {};
    \node[inner sep=0,anchor=west,text width=3.3cm] (note1) at (2.3,0) {$8$};
    \draw (1,-1) node[circle,fill,inner sep=1.5pt] {};
    \node[inner sep=0,anchor=west,text width=3.3cm] (note1) at (0.7,-1) {$6$};
    \node[inner sep=0,anchor=west,text width=3.3cm] (note1) at (1.3,0.5) {$5$};
    \draw (1,0.5) -- (0,0);
    \draw (1,0.5) -- (0,1);
    \draw (1,0.5) -- (1,2);
    \draw (1,0.5) -- (2,1);
    \draw (1,0.5) -- (2,0);
    \draw (1,0.5) -- (1,-1);
    \node[inner sep=0,anchor=west,text width=3.3cm] (note1) at (0.8,-1.5) {$(c)$};
    \end{tikzpicture}
  } %

  \caption{$(a)$ The star of $v:=f^{-1}(5)$. $(b)$ The lower star of $v$ and its lower link (in red). $(c)$ The triangles in the star of $v$ for which $v$ is middle.}
  \label{fig:saddle}
\end{figure*}
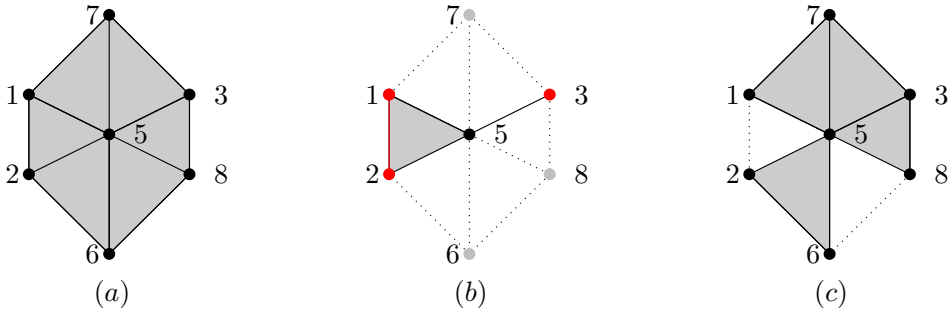

\subsection{PL critical points}\label{subsec:PL}
As previously mentioned, the literature about PL Morse theory proposes several definitions of a critical point. In view of proving their equivalence, we start reviewing these definitions.

Similarly to the smooth theory, PL Morse theory requires working with a  manifold, though  a {\em combinatorial $d$-manifold}, i.e., a simplicial complex of some dimension $d\ge 1$ such that the underlying space of the link of each vertex is homeomorphic to the $(d-1)$-sphere $S^{d-1}$. So, in the following, if  not differently specified, we will always assume that $\Sigma$ is a combinatorial $d$-manifold.

Our brief survey will consider first two definitions of a PL critical point for the case $d=2$ (a widely studied case because of its applications in terrain analysis), and then other two definitions for the case of a combinatorial manifolds of arbitrary dimension $d$.

\paragraph{Banchoff \cite{Banchoff67}}
Banchoff proposes a definition of PL critical points of an injective function $f$ defined on the vertices of a combinatorial $2$-manifold $\Sigma$.
Let $\sigma$ be a triangle $[u, v, w]$ in $\sta(v)$.
If $f(u)<f(v)<f(w)$, we say that $\sigma$ has $v$ {\em middle} for $f$.
For a vertex $v$, we set
$$\iota(v, f):=1-\frac{1}{2}\cdot\#\left\{\text{triangles in $\sta(v)$ with $v$ middle for $f$}\right\}.$$
Hence, vertices of $\Sigma$ are classified as follows:
$$\iota(v, f)=
\begin{cases}
1 & \leftrightarrow \text{ $v$ is a {\em minimum} or {\em maximum},}\\
0 & \leftrightarrow \text{ $v$ is a {\em regular} point,}\\
-k<0 & \leftrightarrow \text{ $v$ is a {\em saddle} of {\em multiplicity $k$}.}
\end{cases}
$$

\paragraph{Edelsbrunner et al. \cite{Edelsbrunner01}}
Edelsbrunner et al. introduce a different definition of a PL critical point of an injective function $f$ defined on the vertices of a combinatorial $2$-manifold $\Sigma$. A vertex $v$ of $\Sigma$ is declared critical or not depending on the number of ``wedges" in which the lower star of $v$ is subdivided.
Formally,  a {\em section} of $\sta^{-}(v)$ is an edge or a triangle in $\sta^{-}(v)$.  Let $S$ be a collection of sections in $\sta^{-}(v)$.   $S$ is called  a {\em contiguous section} of $\sta^{-}(v)$ if $S\setminus\{v\}$ is connected.
A {\em wedge} of $\sta^{-}(v)$ is defined as a contiguous section of $\sta^{-}(v)$ whose boundary in the lower link of $v$ is not a cycle.
Letting $W$ the number of wedges of $\sta^{-}(v)$, $v$ is classified as follows:
$$W=
\begin{cases}
0 & \leftrightarrow \text{ $v$ is a {\em minimum} or {\em maximum},}\\
1 & \leftrightarrow \text{ $v$ is a {\em regular} point,}\\
k+1>1 & \leftrightarrow \text{ $v$ is a {\em saddle} of {\em multiplicity $k$}.}
\end{cases}
$$
For $W=0$, we can distinguish a minimum or maximum point according to the fact that $\sta^{-}(v)$ is $\{v\}$ or it coincides with the entire $\sta(v)$.

\paragraph{Brehm and K{\"u}hnel \cite{Brehm87}}
A definition of a PL critical point for the case of a combinatorial manifold $\Sigma$ of arbitrary dimension $d$ has been proposed in \cite{Brehm87}.
Intuitively, the authors define a vertex $v$ of $\Sigma$ as critical by checking if there is a change in homology when one removes $v$ from its sublevel set.
Formally, a vertex $v$ of $\Sigma$ is classified as {\em PL critical} for $f$ whenever the relative homology $H_*(|\Sigma^l|, |\Sigma^l|\setminus\{v\})$,
with $l=f(v)$, is non-trivial. Otherwise, $v$ is called {\em regular}.
Thanks to the following isomorphisms, the criterion can be expressed in various equivalent ways:
\begin{eqnarray}\label{brehm}
\begin{aligned}
H_*(|\Sigma^l|, |\Sigma^l|\setminus\{v\})&\cong& H_*(|\Sigma^l|, |\Sigma^l\setminus \sta(v)|)\\
&\cong& H_*(\Sigma^l, \Sigma^l\setminus \sta(v))\\
&\cong& H_*(\Sigma^l\cap \overline{\sta(v)}, \Sigma^l\cap \link(v))\\
&\cong& H_*(\overline{\sta^{-}(v)}, \link^{-}(v)).
\end{aligned}
\end{eqnarray}

A PL critical point $v$ of $f$ is said to have {\em index $i$} and {\em multiplicity $k_i$} if $\beta_i(|\Sigma^l|, |\Sigma^l|\setminus\{v\})=k_i$.
In general, a PL critical point might be critical with respect to several indices and its {\em total multiplicity} is $k:=\sum_{i=0}^d k_i$.
PL critical point of index 0 or $d$ will be called point of {\em minimum} or {\em maximum}, respectively. Other PL critical points will be addressed as {\em saddle} points.

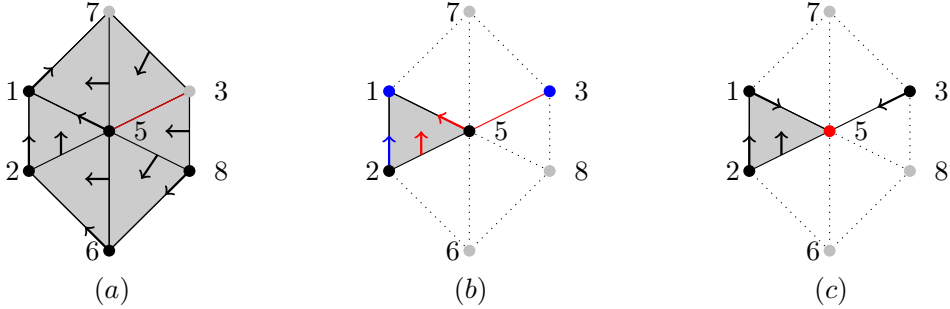
\begin{figure*}[ht]
  \centering
  \resizebox{1.0\textwidth}{!}{%
    \hspace{2.5cm}
    \begin{tikzpicture}
    \draw[fill=black!20] (0,0)--(0,1)--(1,0.5);
    \draw[fill=black!20] (1,2)--(0,1)--(1,0.5);
    \draw[fill=black!20] (1,2)--(2,1)--(1,0.5);
    \draw[fill=black!20] (2,0)--(2,1)--(1,0.5);
    \draw[fill=black!20] (2,0)--(1,-1)--(1,0.5);
    \draw[fill=black!20] (0,0)--(1,-1)--(1,0.5);
    \draw (0,0)--(0,1);
    \draw (0,1)--(1,2);
    \draw (1,2)--(2,1);
    \draw (2,1) -- (2,0);
    \draw (2,0) -- (1,-1);
    \draw (1,-1) -- (0,0);
    \draw [red] (1,0.5) -- (2,1);
    \draw (1,0.5) node[circle,fill,inner sep=1.5pt] {};
    \draw (0,0) node[circle,fill,inner sep=1.5pt] {};
    \node[inner sep=0,anchor=west,text width=3.3cm] (note1) at (-0.3,0) {$2$};
    \draw (0,1) node[circle,fill,inner sep=1.5pt] {};
    \node[inner sep=0,anchor=west,text width=3.3cm] (note1) at (-0.3,1) {$1$};

    \node[inner sep=0,anchor=west,text width=3.3cm] (note1) at (0.7,2) {$7$};
    \draw (2,1) node[circle,fill,lightgray,inner sep=1.5pt] {};
    \node[inner sep=0,anchor=west,text width=3.3cm] (note1) at (2.3,1) {$3$};
    \draw (2,0) node[circle,fill,inner sep=1.5pt] {};
    \node[inner sep=0,anchor=west,text width=3.3cm] (note1) at (2.3,0) {$8$};
    \draw (1,-1) node[circle,fill,inner sep=1.5pt] {};
    \node[inner sep=0,anchor=west,text width=3.3cm] (note1) at (0.7,-1) {$6$};
    \node[inner sep=0,anchor=west,text width=3.3cm] (note1) at (1.3,0.5) {$5$};
    \draw (1,0.5) -- (0,0);
    \draw (1,0.5) -- (0,1);
    \draw (1,0.5) -- (1,2);
    \draw (1,0.5) -- (2,0);
    \draw (1,0.5) -- (1,-1);
    \draw (1,2) node[circle,fill,lightgray,inner sep=1.5pt] {};
    \draw[thick,->] (0,0) -- (0,0.45);
    \draw[thick,->] (0.4,0.2) -- (0.4,0.5);
    \draw[thick,->] (1,0.5) -- (0.6,0.7);
    \draw[thick,->] (1,1.1) -- (0.7,1.1);
    \draw[thick,->] (1.5,1.5) -- (1.35,1.2);
    \draw[thick,->] (2,0.5) -- (1.7,0.5);
    \draw[thick,->] (1.6,0.2) -- (1.4,-0.1);
    \draw[thick,->] (2,0) -- (1.7,-0.3);
    \draw[thick,->] (1,-1) -- (0.7,-0.7);
    \draw[thick,->] (1,-0.1) -- (0.7,-0.1);
    \draw[thick,->] (0,1) -- (0.3,1.3);
    \node[inner sep=0,anchor=west,text width=3.3cm] (note1) at (0.8,-1.5) {$(a)$};
    \end{tikzpicture}
    \hspace{-2cm}
    \quad
      \begin{tikzpicture}
    \draw[fill=black!20] (0,0)--(0,1)--(1,0.5);
    \draw (0,0)--(0,1);
    \draw[dotted] (0,1)--(1,2);
    \draw[dotted] (1,2)--(2,1);
    \draw[dotted] (2,1) -- (2,0);
    \draw [dotted](2,0) -- (1,-1);
    \draw[dotted] (1,-1) -- (0,0);
    \draw [red] (1,0.5) -- (2,1);

    \node[inner sep=0,anchor=west,text width=3.3cm] (note1) at (-0.3,0) {$2$};

    \node[inner sep=0,anchor=west,text width=3.3cm] (note1) at (-0.3,1) {$1$};

    \node[inner sep=0,anchor=west,text width=3.3cm] (note1) at (0.7,2) {$7$};
    \draw (2,1) node[circle,fill,blue,inner sep=1.5pt] {};
    \node[inner sep=0,anchor=west,text width=3.3cm] (note1) at (2.3,1) {$3$};

    \node[inner sep=0,anchor=west,text width=3.3cm] (note1) at (2.3,0) {$8$};

    \node[inner sep=0,anchor=west,text width=3.3cm] (note1) at (0.7,-1) {$6$};
    \node[inner sep=0,anchor=west,text width=3.3cm] (note1) at (1.3,0.5) {$5$};
    \draw (1,0.5) -- (0,0);
    \draw (1,0.5) -- (0,1);
    \draw [dotted](1,0.5) -- (1,2);

    \draw[dotted] (1,0.5) -- (2,0);
    \draw[dotted] (1,0.5) -- (1,-1);
    \draw[thick,->, red] (0.4,0.2) -- (0.4,0.5);
    \draw[thick,->, blue] (0,0) -- (0,0.45);
    \draw[thick,->, red] (1,0.5) -- (0.6,0.7);
    \node[inner sep=0,anchor=west,text width=3.3cm] (note1) at (0.8,-1.5) {$(b)$};
    \draw (1,-1) node[circle,fill, lightgray,inner sep=1.5pt] {};
    \draw (2,0) node[circle,fill, lightgray,inner sep=1.5pt] {};
    \draw (1,2) node[circle,fill, lightgray, inner sep=1.5pt] {};
    \draw (0,1) node[circle,fill,blue,inner sep=1.5pt] {};
    \draw (0,0) node[circle,fill,inner sep=1.5pt] {};
    \draw (1,0.5) node[circle,fill,inner sep=1.5pt] {};
    \end{tikzpicture}

    \hspace{-2cm}
    \quad
    \begin{tikzpicture}
    \draw[fill=black!20] (0,0)--(0,1)--(1,0.5);
    \draw (0,0)--(0,1);
    \draw[dotted] (0,1)--(1,2);
    \draw[dotted] (1,2)--(2,1);
    \draw[dotted] (2,1) -- (2,0);
    \draw [dotted](2,0) -- (1,-1);
    \draw[dotted] (1,-1) -- (0,0);

    \draw (0,0) node[circle,fill,inner sep=1.5pt] {};
    \node[inner sep=0,anchor=west,text width=3.3cm] (note1) at (-0.3,0) {$2$};
    \draw (0,1) node[circle,fill,inner sep=1.5pt] {};
    \node[inner sep=0,anchor=west,text width=3.3cm] (note1) at (-0.3,1) {$1$};

    \node[inner sep=0,anchor=west,text width=3.3cm] (note1) at (0.7,2) {$7$};
    \draw (2,1) node[circle,fill,inner sep=1.5pt] {};
    \node[inner sep=0,anchor=west,text width=3.3cm] (note1) at (2.3,1) {$3$};

    \node[inner sep=0,anchor=west,text width=3.3cm] (note1) at (2.3,0) {$8$};

    \node[inner sep=0,anchor=west,text width=3.3cm] (note1) at (0.7,-1) {$6$};
    \node[inner sep=0,anchor=west,text width=3.3cm] (note1) at (1.3,0.5) {$5$};
    \draw (1,0.5) -- (0,0);
    \draw (1,0.5) -- (0,1);
    \draw [dotted](1,0.5) -- (1,2);
    \draw  (1,0.5) -- (2,1);
    \draw[dotted] (1,0.5) -- (2,0);
    \draw[dotted] (1,0.5) -- (1,-1);
    \draw[thick,->] (0.4,0.2) -- (0.4,0.5);
    \draw[thick,->] (0,0) -- (0,0.45);
    \draw[thick,->] (2,1) -- (1.6,0.8);
    \draw[thick,->] (0,1) -- (0.4,0.8);
    \node[inner sep=0,anchor=west,text width=3.3cm] (note1) at (0.8,-1.5) {$(c)$};
    \draw (1,0.5) node[circle,fill,red, inner sep=1.5pt] {};
    \draw (1,2) node[circle,fill, lightgray, inner sep=1.5pt] {};
    \draw (2,0) node[circle,fill, lightgray,inner sep=1.5pt] {};
    \draw (1,-1) node[circle,fill, lightgray,inner sep=1.5pt] {};
    \end{tikzpicture}
  } %
  \caption{$(a)$ An RP discrete gradient vector field. $(b)$ An RP discrete gradient vector field on the lower star of $v:=f^{-1}(5)$. $(c)$ A non-RP discrete vector field. }
  \label{fig:correspondence}
\end{figure*}

\paragraph{Edelsbrunner et al. \cite{Edelsbrunner10,Edelsbrunner03}}
Another definition of a PL critical point of a function $f$ for a combinatorial $d$-manifold $\Sigma$  has been introduced   for the case $d=3$ in \cite{Edelsbrunner03}, and generalized to  arbitrary dimension in  \cite{Edelsbrunner10}.

Edelsbrunner et al. define a vertex $v$  as PL critical or not depending on the reduced homology of its lower link.
More formally, let $\tilde{\beta}_j$ be the rank of the reduced $j^{th}$ homology group of $\link^{-}(v)$.
A vertex $v$ of $\Sigma$ is called {\em regular} if $\tilde{\beta}_j=0$ for any $j=-1, 0, 1, \dots, d$.
Else, $v$ is called a {\em PL critical point} of {\em index $i$} and {\em multiplicity $k$} of $f$ if
$$\tilde{\beta}_j=
\begin{cases}
k & \text{ for } j=i-1,\\
0 & \text{ otherwise.}
\end{cases}$$
Specifically, a PL critical point of index $i$ is called a {\em minimum} if $i=0$, a {\em maximum} if $i=d$, and an {\em $i$-saddle} otherwise.
A  PL critical point with multiplicity $k>1$ is called a {\em multiple saddle}.
  The function $f$ is called {\em PL Morse} if all its PL critical points have multiplicity 1.\\

Having so many different definitions for a PL critical point, until we will show that they are all equivalent, for the sake of clarity we  distinguish between them by referring to them as either an $I$-critical point ($I$ for index), a $W$-critical point ($W$ for wedge), an  $H$-critical point ($H$ for homology), or an $L$-critical point ($L$ for link), according to whether it is a PL critical point as defined by Banchoff \cite{Banchoff67}, Edelsbrunner et al. \cite{Edelsbrunner01},  Brehm and K{\"u}hnel \cite{Brehm87}, or  Edelsbrunner et al. \cite{Edelsbrunner10}, respectively.

Figure \ref{fig:saddle} illustrates how to classify a vertex of a combinatorial $2$-manifold $\Sigma$ according to the above definitions.
In Figure \ref{fig:saddle}$(a)$, the star of the vertex $v:=f^{-1}(5)$ is given. In Figure \ref{fig:saddle}$(b)$, we see  that $\beta_1(\overline{\sta^{-}(v)}, \link^{-}(v))=1$, that is,  $\beta_1(|\Sigma^5|, |\Sigma^5|\setminus\{v\})=1$.  So, $v$ is an $H$-saddle point. 
Since the lower star has two wedges (one consisting of the triangle $[1, 2, 5]$ and its egdes $[1, 5]$ and $[2, 5]$, and the other one consisting of the edge $[3, 5]$), $v$ is also a $W$-saddle point.  
In Figure \ref{fig:saddle}$(c)$, we have that $\iota(v, f)=1-\frac{1}{2}\cdot4=-1$.  Thus, $v$ is also an $I$-saddle point. 

\subsection{Discrete critical simplices}\label{subsec:discrete}

\textit{Discrete Morse theory} introduced by Forman in \cite{Forman98} represents the most recently proposed discrete counterpart of the smooth Morse theory.
At the price of being a little less intuitive, discrete Morse theory presents some advantages compared to PL Morse theory.
First of all, discrete Morse theory can be defined for arbitrary cell complexes not necessarily discretizing a manifold domain. In spite of this, for the sake of simplicity, we will review discrete Morse theory in the context of simplicial complexes and we will be forced to work in the common framework of the combinatorial manifolds everytime that a direct comparison between the two theories will be presented.
Another great advantage of discrete Morse theory is related to the possibility of describing it in purely combinatorial terms preventing the need of explicitly exhibiting a Morse function defined on the complex. Exploiting this fact, in this subsection we introduce some basic notions of discrete Morse theory by adopting a combinatiorial point of view.

Given a simplicial complex $\Sigma$, discrete Morse theory is based on the definition of a collection of simplex pairs simulating the gradient of a function defined on $\Sigma$.
Formally, a \textit{discrete vector field $V$} on $\Sigma$ is a collection of pairs of simplices $(\sigma, \tau)\in \Sigma \times \Sigma$ such that $\sigma$ is a face of $\tau$ of dimension $\dim(\tau) - 1$ (in the following, we will denote that by $\sigma < \tau$) and each simplex of $\Sigma$ is in at most one pair of $V$. Pictorially, this is illustrated by an arrow from $\sigma$ to $\tau$ as in Figure \ref{fig:correspondence}$(a)$.

Given a discrete vector field $V$, a \textit{$V$-path} (or, equivalently, a \textit{gradient path}) is a sequence $(\sigma_1, \tau_1), (\sigma_2, \tau_2), \dots, (\sigma_{r},\tau_r)$ of pairs of $i$-simplices $\sigma_j$ and $(i+1)$-simplices $\tau_j$, such that $(\sigma_j, \tau_j)\in V$ for $j=1, \dots, r$, and $\sigma_{j+1} < \tau_j$ and $\sigma_j\neq\sigma_{j+1}$ for $j=1, \dots, r-1$.

A $V$-path is a \textit{closed path} if $\sigma_{1}$ is a face of $\tau_{r}$ different from $\sigma_{r}$.
A discrete vector field $V$ is called a {\em discrete gradient vector field} if $V$ is free of closed paths. It is possible (but not necessary) to a define discrete Morse function whose gradient is $V$.

Given a simplicial complex $\Sigma$ endowed with a discrete gradient vector field $V$, an $i$-simplex $\sigma \in \Sigma$ is called {\em regular} if it belongs to a pair of $V$. Otherwise, $\sigma$ is called a \textit{discrete critical simplex of index $i$} (equivalently, a \textit{discrete critical $i$-simplex} or an \textit{$i$-saddle}). More specifically, a discrete critical simplex of index $0$ is called a \textit{minimum}, while a discrete critical simplex of index $d=\dim(\Sigma)$ a \textit{maximum}.

PL and discrete Morse theories deserve to be addressed as discretized versions of Morse theory since they both adapt the fundamental theorems and properties holding in the smooth case to the  combinatorial setting. Among these results, there is a collection of inequalities usually called {\em weak Morse inequalities}.
Precisely, in discrete Morse theory, given a discrete gradient vector field $V$ on a simplicial complex $\Sigma$, weak Morse inequalities state that $m_i(V)\geq \beta_i(\Sigma)$, for any $i=0, \dots, \dim(\Sigma)$,
where $m_i(V)$ denotes the number of discrete critical $i$-simplices of $V$. Analogous inequalities hold for a PL Morse function $f$.
A discrete gradient vector field $V$ is called {\em perfect} if, for any $i=0, \dots, \dim(\Sigma)$, the equality $m_i(V)= \beta_i(\Sigma)$ is satisfied.

\section{Equivalence between the notions of PL critical points}\label{sec:critical-equivalence}
This section is mainly devoted to proving that all the notions of a  PL critical point presented in the previous section are equivalent.
More formally, the main contribution of this section is the following result.
\begin{theorem}\label{thm:eq}
For a vertex $v$ of a combinatorial manifold $\Sigma$ of arbitrary dimension $d\ge 1$,  endowed with an injective scalar function $f$ defined on its vertices, the following statements are equivalent:
\begin{enumerate}
\item $v$ is an $H$-critical point of $f$ of index $i$ and multiplicity $k_i$ 
\item  $v$ is an $L$-critical point of $f$ of the same index and the same multiplicity.           
\end{enumerate}
In particular, for $d=2$, being an $H$-critical point of $f$ is also equivalent to being an $I$-critical point or  a $W$-critical point.
\end{theorem}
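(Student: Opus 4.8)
The plan is to establish the general-dimensional equivalence between $H$-criticality and $L$-criticality first, and then, for $d=2$, to reduce Banchoff's $I$-criticality and Edelsbrunner's $W$-criticality to it by a direct analysis of the link of $v$, which in that case is a circle.

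For the equivalence of $H$- and $L$-critical points in arbitrary dimension, I would start from the last isomorphism in~(\ref{brehm}), which already reduces the Brehm--K{\"u}hnel relative homology to $H_*(\overline{\sta^{-}(v)},\link^{-}(v))$. The crucial geometric remark is that the closed lower star is a cone over the lower link: every simplex of $\sta^{-}(v)$ can be written as $v\ast\rho$ with $\rho\in\link^{-}(v)$, and conversely, so $\overline{\sta^{-}(v)}=v\ast\link^{-}(v)$ is contractible. Writing the long exact sequence of the pair $(\overline{\sta^{-}(v)},\link^{-}(v))$ in reduced homology and using $\tilde H_n(\overline{\sta^{-}(v)})=0$ for all $n$, the connecting homomorphisms become isomorphisms
$$
H_i\bigl(\overline{\sta^{-}(v)},\link^{-}(v)\bigr)\;\cong\;\tilde H_{i-1}\bigl(\link^{-}(v)\bigr).
$$
Passing to ranks yields $\beta_i(|\Sigma^l|,|\Sigma^l|\setminus\{v\})=\tilde\beta_{i-1}(\link^{-}(v))$ for every $i$, which is exactly the assertion that $v$ is $H$-critical of index $i$ and multiplicity $k_i$ if and only if it is $L$-critical of index $i$ and multiplicity $k_i$ (read index by index, so as to also cover vertices critical in several indices at once). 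The conventions $\tilde H_{-1}(\emptyset)\neq 0$ and $\tilde H_{n}(\mathrm{pt})=0$ make the isomorphism valid also in the extreme degrees, i.e.\ at minima and maxima.

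When $d=2$ the underlying space of $\link(v)$ is $S^1$, so $\link^{-}(v)$ is either empty, the whole circle, or a disjoint union of arcs (possibly single vertices). Letting $c$ be the number of maximal runs of vertices $u$ with $f(u)<f(v)$ around the link cycle, $\link^{-}(v)$ has $c$ connected components in the proper-arc case, so that $\tilde\beta_0(\link^{-}(v))=c-1$, while the empty case ($\tilde\beta_{-1}=1$) and the full-circle case ($\tilde\beta_{1}=1$) correspond respectively to a minimum and a maximum. To recover the $I$-index I would observe that a triangle $[v,v_j,v_{j+1}]\in\sta(v)$ has $v$ middle precisely when $v_j$ and $v_{j+1}$ lie on opposite sides of $v$, so the number of $v$-middle triangles equals the number of sign changes around the link cycle, namely $2c$ (and $0$ at minima and maxima); hence $\iota(v,f)=1-c$, which reproduces Banchoff's trichotomy with saddle multiplicity $c-1$, matching the $H/L$ value. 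To recover the wedge count $W$ I would check that each maximal lower run, together with $v$ and the incident lower-star cells, is a contiguous section whose boundary in $\link^{-}(v)$ is an arc (not a cycle), hence a wedge, whereas at a maximum the single contiguous section has the whole circle as boundary and is therefore not a wedge; this gives $W=c$ in the regular/saddle case and $W=0$ at minima and maxima, again in agreement.

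The genuinely conceptual step is the cone identification $\overline{\sta^{-}(v)}=v\ast\link^{-}(v)$ together with the long exact sequence, from which the general-dimensional equivalence is immediate. I expect the main obstacle to lie in the $d=2$ bookkeeping: matching the reduced-homology conventions in the extreme degrees, and, above all, verifying that the wedge definition correctly separates a maximum from a regular point even though both have a connected lower link --- the distinction being exactly whether the boundary in the lower link is a cycle.
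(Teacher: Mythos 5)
Your proposal is correct and follows essentially the same route as the paper: the cone identification $\overline{\sta^{-}(v)}=v\ast\link^{-}(v)$ combined with the long exact sequence of the pair in reduced homology (the paper's Lemma~\ref{equation}) yields the $H$/$L$ equivalence in every dimension, and for $d=2$ the paper's Proposition~\ref{eq2d} likewise identifies the number of wedges with half the number of $v$-middle triangles and with the number of connected components of the lower link, exactly your quantity $c$. The only ingredient you pass over is the paper's Lemma~\ref{multiplicity}, which confirms that indices $0$ and $d$ can only occur alone and with multiplicity $1$, so that the $L$-definition's requirement that the reduced homology of $\link^{-}(v)$ be concentrated in a single degree is consistent with the $H$-classification in the extreme degrees; your rank equality for all $i$ carries the same information, so this is a matter of bookkeeping rather than a gap.
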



As a side result, we also show that, even if the definition of an  $H$-critical point  can be easily generalized to any point of the underlying space of $\Sigma$, it is not actually possible for a point $p$ of $|\Sigma|$ that is not a vertex of $\Sigma$ to be so.

\subsection{Equivalence}\label{subsec:equivalence}

A required first step for establishing a correspondence between critical sets is the proof of the equivalence of the various notions of a PL critical point claimed by Theorem \ref{thm:eq}.



Let us preliminarily notice that in the papers by Banchoff \cite{Banchoff67} and by Edelsbrunner et al. \cite{Edelsbrunner01,Edelsbrunner10,Edelsbrunner03} the only PL critical points which are addressed as multiple are the saddle points. In contrast, in the classification proposed Brehm and K{\"u}hnel \cite{Brehm87}, the existence of non-saddle PL critical points with total multiplicity greater than 1 is not explicitly excluded.
The following results confirm the non-existence of such points and ensure us that all the proposed classifications are complete.

\begin{lemma}\label{equation}
  Given a vertex $v$ of $\Sigma$, for every $i$, we have that:
  \begin{eqnarray*}
    H_i(\overline{\sta^{-}(v)}, \link^{-}(v))\cong \tilde{H}_{i-1}(\link^{-}(v)).
  \end{eqnarray*}
\end{lemma}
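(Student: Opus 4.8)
The plan is to recognize the relative pair $(\overline{\sta^{-}(v)}, \link^{-}(v))$ as the pair consisting of a simplicial cone and its base, and then to extract the claimed isomorphism directly from the long exact sequence of the pair, exploiting the acyclicity of a cone.

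First I would prove the combinatorial identity $\overline{\sta^{-}(v)} = v * \link^{-}(v)$, where $v*L$ denotes the simplicial cone with apex $v$ over a complex $L$ (its simplices being $\{v\}$, all $\sigma\in L$, and all $\{v\}\cup\sigma$ with $\sigma\in L$). For the inclusion into the cone, every simplex $\rho$ of $\overline{\sta^{-}(v)}$ is a face of some $\tau\in\sta^{-}(v)$; since $\tau$ contains $v$ and all its vertices take $f$-value at most $f(v)$, the face $\tau\setminus\{v\}$ lies in $\link^{-}(v)$, and distinguishing whether or not $v\in\rho$ shows $\rho\in v*\link^{-}(v)$. For the reverse inclusion, given $\sigma\in\link^{-}(v)$ one checks that $\{v\}\cup\sigma$ is a simplex of $\Sigma$ lying in $\sta^{-}(v)$ (it contains $v$ and satisfies $f_{max}(\{v\}\cup\sigma)=f(v)$), so both $\sigma$ and $\{v\}\cup\sigma$ belong to $\overline{\sta^{-}(v)}$. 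Here I rely on the fact that $\link^{-}(v)=\link(v)\cap\Sigma^{f(v)}$ is itself a simplicial complex, being the intersection of two subcomplexes, hence closed under faces.

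Next, since $\overline{\sta^{-}(v)}$ is a cone it is acyclic, i.e. $\tilde H_j(\overline{\sta^{-}(v)})=0$ for all $j$ (the cone operator supplies an explicit chain contraction). Feeding this into the reduced long exact sequence of the pair
\[
\cdots \to \tilde H_i(\overline{\sta^{-}(v)}) \to H_i(\overline{\sta^{-}(v)}, \link^{-}(v)) \xrightarrow{\ \partial\ } \tilde H_{i-1}(\link^{-}(v)) \to \tilde H_{i-1}(\overline{\sta^{-}(v)}) \to \cdots
\]
the two outer terms vanish, forcing the connecting homomorphism $\partial$ to be an isomorphism $H_i(\overline{\sta^{-}(v)}, \link^{-}(v))\cong\tilde H_{i-1}(\link^{-}(v))$ for every $i$, which is exactly the assertion.

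The main obstacle, beyond the cone identification itself, is the boundary case $i=0$ together with $\link^{-}(v)=\emptyset$, which occurs precisely when $v$ is a minimum and $\overline{\sta^{-}(v)}=\{v\}$. I would handle it by adopting the standard augmentation convention $\tilde H_{-1}(\emptyset)\cong$ the coefficient field and $\tilde H_{-1}(X)=0$ for $X\neq\emptyset$; with this convention the reduced long exact sequence and the resulting isomorphism remain valid in degree $0$, since $H_0(\{v\},\emptyset)$ and $\tilde H_{-1}(\emptyset)$ both equal the coefficient field. Thus the real content is the combinatorial cone identification of the first step; the homological conclusion is then immediate and holds uniformly in all dimensions.
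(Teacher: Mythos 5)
Your proof is correct and follows essentially the same route as the paper's: both arguments feed the acyclicity of the cone $\overline{\sta^{-}(v)}$ into the reduced long exact sequence of the pair $(\overline{\sta^{-}(v)}, \link^{-}(v))$ and read off the isomorphism from the connecting homomorphism. The only difference is that you spell out the combinatorial identification $\overline{\sta^{-}(v)} = v \ast \link^{-}(v)$ and the degenerate case $\link^{-}(v)=\emptyset$ explicitly, which the paper leaves implicit; this is welcome extra care but not a different argument.
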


\begin{proof}
  Thanks to the properties of relative homology, we have the following long exact sequence of the pair $(\overline{\sta^{-}(v)}, \link^{-}(v))$ for reduced homology
  {\small
  \begin{gather*}
    \cdots \rightarrow \tilde{H}_{i}(\link^{-}(v)) \rightarrow \tilde{H}_{i}(\overline{\sta^{-}(v)})  \rightarrow \tilde{H}_i(\overline{\sta^{-}(v)}, \link^{-}(v)) \rightarrow \\
    \rightarrow \tilde{H}_{i-1}(\link^{-}(v)) \rightarrow \tilde{H}_{i-1}(\overline{\sta^{-}(v)}) \rightarrow \cdots
  \end{gather*}}
  as well as the isomorphism $$H_i(\overline{\sta^{-}(v)}, \link^{-}(v))\cong \tilde{H}_i(\overline{\sta^{-}(v)}, \link^{-}(v)).$$
  Moreover, since $\overline{\sta^{-}(v)}$ is a cone, $\tilde{H}_i(\overline{\sta^{-}(v)})=0$ for every $i$.
  By combining the above facts, the thesis follows.
\qed \end{proof}

\begin{lemma}\label{multiplicity}
  Let $v$ be an $H$-critical point of index $i$ with total multiplicity $k:=\sum_{i=0}^d k_i>1$. 
 Then, $k_0=k_d=0$.
\end{lemma}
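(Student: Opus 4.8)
The plan is to reduce the whole statement to the reduced homology of the lower link by means of Lemma~\ref{equation}, which gives $k_j=\beta_j(\overline{\sta^{-}(v)},\link^{-}(v))=\tilde{\beta}_{j-1}(\link^{-}(v))$ for every $j$. In this language the claim becomes: if $\tilde{\beta}_{-1}(\link^{-}(v))\neq 0$ or $\tilde{\beta}_{d-1}(\link^{-}(v))\neq 0$, then the total multiplicity equals $1$; taking the contrapositive then yields $k_0=k_d=0$ whenever $k>1$. I would treat the two extremal degrees separately but along the same pattern: in each case a non-vanishing reduced Betti number forces $\link^{-}(v)$ to be one of the two ``extreme'' subcomplexes of $\link(v)$, whose reduced homology is concentrated in a single degree of rank one.

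For $k_0$, I would use that $\tilde{H}_{-1}$ of a simplicial complex is non-trivial (and then of rank one) precisely when the complex is empty, so $k_0=\tilde{\beta}_{-1}(\link^{-}(v))\neq 0$ forces $\link^{-}(v)=\emptyset$. Feeding $\link^{-}(v)=\emptyset$ back into Lemma~\ref{equation} gives $k_j=\tilde{\beta}_{j-1}(\emptyset)=0$ for all $j\geq 1$, because the reduced homology of the empty complex is concentrated in degree $-1$. Hence $k=k_0=1$, contradicting $k>1$, and therefore $k_0=0$.

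For $k_d$, here the manifold hypothesis enters: since $\Sigma$ is a combinatorial $d$-manifold, $\link(v)\cong S^{d-1}$ and $\link^{-}(v)$ is the subcomplex of $\link(v)$ induced by the vertices $w$ with $f(w)<f(v)$. The reduced top homology of a triangulated $(d-1)$-sphere is of rank one, generated by the fundamental cycle supported on all its $(d-1)$-simplices. As $\link^{-}(v)$ carries no $d$-chains, $\tilde{H}_{d-1}(\link^{-}(v))$ coincides with its top cycles, each of which is also a cycle of $\link(v)\cong S^{d-1}$ and hence a multiple of the fundamental cycle. Consequently $\tilde{\beta}_{d-1}(\link^{-}(v))\neq 0$ can occur only if $\link^{-}(v)$ contains every top simplex of $\link(v)$, that is $\link^{-}(v)=\link(v)\cong S^{d-1}$. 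Plugging this into Lemma~\ref{equation} gives $k_j=\tilde{\beta}_{j-1}(S^{d-1})=0$ for $j\neq d$, so again $k=k_d=1$, contradicting $k>1$, and therefore $k_d=0$.

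The routine parts --- the identification $k_j=\tilde{\beta}_{j-1}(\link^{-}(v))$ and the reduced homology of the empty complex --- are immediate from Lemma~\ref{equation} and standard conventions. The one step deserving care, and the main obstacle, is the $k_d$ argument: one must rule out a non-zero top reduced homology class of a proper subcomplex of $S^{d-1}$, which is exactly where the hypothesis $\link(v)\cong S^{d-1}$ is used through the fundamental-cycle argument (with the degenerate case $d=1$, where $\link(v)=S^0$, checked directly).
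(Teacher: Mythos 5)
Your proof is correct and follows essentially the same route as the paper's: both argue by contradiction, use Lemma~\ref{equation} to translate $k_0\neq 0$ and $k_d\neq 0$ into $\tilde\beta_{-1}(\link^{-}(v))\neq 0$ and $\tilde\beta_{d-1}(\link^{-}(v))\neq 0$, deduce that $\link^{-}(v)$ must be $\emptyset$ or all of $\link(v)\cong S^{d-1}$ respectively, and conclude that the total multiplicity is then $1$. The only difference is that you spell out, via the fundamental-cycle argument, why nonzero top reduced homology forces $\link^{-}(v)=\link(v)$ --- a step the paper asserts with only a brief appeal to the combinatorial-manifold hypothesis.
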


\begin{proof}
  Let us prove that by reductio ad absurdum. Let us suppose that $k_0\neq 0$. This is true if and only if (by Lemma \ref{equation}) $\tilde{\beta}_{-1}(\link^{-}(v))\neq 0$ if and only if $\link^{-}(v)=\emptyset$ if and only if $\sta^{-}(v)=\{v\}$. Thus, denoting $f(v)$ by $l$,
  $${\small \beta_i(|\Sigma^l|, |\Sigma^l|\setminus\{v\})=\beta_i(\overline{\sta^{-}(v)}, \link^{-}(v))=\begin{cases}
  1 & \text{ for } i=0,\\
  0 & \text{ otherwise.}
  \end{cases}}$$
  This leads to a contradiction since by hypothesis $k>1$.
  Now, let us suppose that $k_d\neq 0$. This is true if and only if (by Lemma \ref{equation}) $\tilde{\beta}_{d-1}(\link^{-}(v))\neq 0$ if and only if (by the fact that $\Sigma$ is a combinatorial $d$-manifold and so the underlying space of the link of each vertex is homeomorphic to the $(d-1)$-sphere $S^{d-1}$) $\link^{-}(v)=\link(v)=S^{d-1}$ if and only if $\sta^{-}(v)=\sta(v)$. Thus,
  $${\small \beta_i(|\Sigma^l|, |\Sigma^l|\setminus\{v\})=\beta_i(\overline{\sta^{-}(v)}, \link^{-}(v))=\begin{cases}
  1 & \text{ for } i=d,\\
  0 & \text{ otherwise.}
  \end{cases}}$$
  This leads to a contradiction since by hypothesis $k>1$.
\qed \end{proof}


Focusing on the case of a combinatorial $2$-manifold $\Sigma$, we prove here the equivalence between the notions of $I$-, $W$-, and $H$-critical points.

\begin{proposition}\label{eq2d}
  For $d=2$, a vertex $v$ of $\Sigma$ is an {$H$-}critical point of $f$ of index $i$ and multiplicity $k_i$ if and only if $v$ is an $I$- or a $W$-critical point of $f$ of the same index and the same multiplicity.
\end{proposition}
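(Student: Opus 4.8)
The plan is to route both equivalences through the reduced homology of the lower link, exploiting the fact that for $d=2$ the link $\link(v)$ is homeomorphic to $S^1$. By Lemma~\ref{equation}, the $H$-critical data of $v$ is encoded by $\tilde\beta_{i-1}(\link^{-}(v))$, so the first step is to enumerate the possible homotopy types of $\link^{-}(v)$. Since $\link^{-}(v)$ is obtained from the cycle $\link(v)$ by keeping exactly the vertices $a_j$ with $f(a_j)<f(v)$ together with the edges joining consecutive kept vertices, it is a subcomplex of a circle and hence falls into exactly three cases: it is empty (precisely when $\sta^{-}(v)=\{v\}$), it is the whole circle $S^1$ (precisely when $\sta^{-}(v)=\sta(v)$), or it is a disjoint union of $c\ge 1$ contractible arcs. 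Computing reduced Betti numbers in each case gives, respectively, a minimum ($k_0=1$ via the $\tilde H_{-1}$ term), a maximum ($k_2=1$), and---when the lower link is a proper nonempty subcomplex---index $1$ with $k_1=\tilde\beta_0=c-1$. Thus $v$ is $H$-regular exactly when $c=1$, and an $H$-saddle of multiplicity $c-1$ when $c\ge 2$.

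Next I would establish the $W$-equivalence by matching wedges with connected components of $\link^{-}(v)$. Each maximal run of consecutive lower vertices spans a fan of triangles in $\sta^{-}(v)$ whose boundary in the lower link is the corresponding arc, while an isolated lower vertex contributes only the edge $[v,a_j]$ whose lower-link boundary is a point; in either case this boundary is a path, hence not a cycle, so each component of $\link^{-}(v)$ yields exactly one wedge. When $\link^{-}(v)$ is the full circle the unique contiguous section is the whole lower star, whose boundary in the lower link is the cycle $S^1$ and is therefore excluded, giving $W=0$; the empty case also gives $W=0$. Consequently $W=0$ for a minimum or maximum, $W=1$ for a regular point ($c=1$), and $W=c=k_1+1$ for a saddle, which is exactly Edelsbrunner's classification with multiplicity $W-1=c-1=k_1$.

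Finally, for the $I$-equivalence I would count, for each edge $[a_j,a_{j+1}]$ of the link cycle, whether the triangle $[v,a_j,a_{j+1}]$ has $v$ middle: this happens precisely when one endpoint lies below $f(v)$ and the other above, i.e. at a sign change as one traverses $\link(v)$ labelling each vertex by the sign of $f(a_j)-f(v)$. The number of sign changes around a cycle is even and equals $2c$, where $c$ is the number of maximal runs of lower vertices (equivalently, of components of $\link^{-}(v)$), provided both signs occur; hence the number of $v$-middle triangles is $2c$ and $\iota(v,f)=1-c$. This gives $\iota=0$ for a regular point and $\iota=-(c-1)$ for a saddle of multiplicity $c-1$, while the all-lower and all-upper cases (no sign change) give $\iota=1$ for a maximum and a minimum. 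Matching these values with the $H$-classification from the first step completes the chain of equivalences.

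The delicate points I expect to need the most care are, first, the exclusion clause in the wedge definition---verifying that ``boundary not a cycle'' is exactly what separates the maximum (a full disk over $S^1$) from genuine arc-like wedges, so that components of $\link^{-}(v)$ and wedges are in bijection in the non-maximal cases; and second, the parity argument relating sign changes to the number of lower-link components, together with keeping the reduced-homology index conventions (notably the $\tilde H_{-1}$ contribution of the empty lower link) consistent across all three definitions.
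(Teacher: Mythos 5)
Your proof is correct and follows essentially the same route as the paper's: both arguments reduce everything to the observation that $\link^{-}(v)$ is either empty, the full circle, or a disjoint union of $c$ contractible arcs, with Lemma~\ref{equation} converting this structure into the $H$-classification and the component count $c$ matching the wedge count $W$. The only real difference is cosmetic --- the paper disposes of $I\Leftrightarrow W$ in one line by asserting that $W$ equals half the number of $v$-middle triangles, whereas you justify that count explicitly via the sign-change argument around $\link(v)$, which is arguably a more complete write-up of the same step.
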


\begin{proof}
  Let us start by comparing the definitions of an $I$- and a $W$-critical point. 
	Their equivalence is easily proved by noticing that by definition, for a vertex $v$, the number  $W$ of wedges of $\sta^-(v)$ coincides with half the number of triangles in $\sta(v)$ with $v$ middle for $f$.
  In order to conclude the proof, we show the equivalence between the definition of a $W$- 
   and an $H$-critical point. 
  Let us consider the various possible cases.
  Let $v$ be a point of minimum according to $W$-criticality. 
   By definition, we have that $W=0$ and $\sta^-(v)=\{v\}$. This is equivalent to the fact that $\link^-(v)$ is empty. Since the only simplicial complex having the same reduced homology of the empty complex is the empty complex itself, thanks to Lemma \ref{equation}, the previous condition is satisfied if and only if $v$ is a PL critical point of index 0 and multiplicity 1 according to $H$-criticality, 
    i.e., $v$ is a point of minimum according to $H$-criticality. 
    A vertex $v$ is a point of maximum according to $W$-criticality 
    if and only if $W=0$ and $\sta^-(v)=\sta(v)$, i.e., if and only if $\link^-(v)=\link(v)$. Since $\Sigma$ is a combinatorial $2$-manifold, this is true if and only if $\link^-(v)$ has the same reduced homology of the sphere $S^{1}$. By Lemma \ref{equation}, this is satisfied if and only if $v$ is an $H$-critical point of index 2 and multiplicity 1, 
    i.e., $v$ is a point of maximum according to $H$-criticality. 
   Given a vertex $v$ which is not a point of minimum or maximum according to $W$-criticality, 
   we have that $W>0$.  Let us notice that, in such case, $\link^-(v)$  is a simplicial complex consisting of exactly $W$ connected components and free of higher dimensional homological cycles: for $W>0$,
  $$\tilde\beta_i(\link^-(v))=\begin{cases} W-1 & \text{ for } i=0,\\
  0 & \text{ otherwise.}
  \end{cases}$$
  So, the number of wedges of $\sta^{-}(v)$ is $W=1$ if and only if $v$ is a regular point according to $H$-criticality; 
    $W>1$ if and only if $v$ is an $H$-critical point of index 1 and multiplicity $W-1$. 
\qed \end{proof}


We are now ready for proving Theorem \ref{thm:eq}.
\begin{proof}
The case $d=2$ has already been discussed and proved by Proposition \ref{eq2d}. For the case of arbitrary $d\geq 1$, the fact that a vertex $v$ of $\Sigma$ is an $H$-critical point of $f$ of index $i$ and multiplicity $k_i$ if and only if $v$ is an $L$-critical point of $f$ of the same index and the same multiplicity is obtained by combining Lemma \ref{equation} and Lemma \ref{multiplicity}.
\qed \end{proof}

Theorem \ref{thm:eq} ensures us an equivalence between all the definitions of PL critical point proposed in the literature.
So, in what follows, it will not be ambiguous to address a vertex as a PL critical point without specifying which definition we are adopting.

\subsection{PL critical points are necessarily vertices}\label{subsec:pointsnotcritical}



The generalization of the definition of critical point given by Brehm and K\"{u}hnel in \cite{Brehm87} to the case when $p$ is not necessarily a vertex of $\Sigma$ could straightforwardly be as follows: $p$ is  PL critical for $f$ if and only if $H_*(|\Sigma|^l, |\Sigma|^l\setminus\{p\})\neq 0$, where $l=f_{PL}(p)$ and, for a subset $X\subseteq |\Sigma|$, $X^l =\{p\in X \, | \, f_{PL}(p)\leq l \}$.
In the particular case when $p$ is a vertex of $\Sigma$, this is equivalent to the Brehm and K\"{u}hnel's definition because $|\Sigma|^l$ deformation retracts onto $|\Sigma^l|$.
On the other hand, we prove here that this generalization would not add any new PL critical point to the function $f$ because only the vertices of $\Sigma$ can be PL critical.

\begin{proposition}
	For every point $p\in|\Sigma|$ such that $p \notin \Sigma_0$,
	$H_i(|\Sigma|^l, |\Sigma|^l\setminus\{p\})=0$, for each $i\geq 0$.
\end{proposition}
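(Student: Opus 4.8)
The plan is to localize the computation around $p$ and to show that, in a suitable local model, the sublevel set is a half-space with $p$ lying on its bounding level set; such a pair is acyclic. To begin, since $p\notin\Sigma_0$, the point $p$ lies in the relative interior of a unique simplex $\tau=[v_0,\dots,v_k]$ with $k=\dim(\tau)\ge 1$. Because $f$ is injective on the vertices, the values $f(v_0),\dots,f(v_k)$ are pairwise distinct, so the restriction $f_{PL}|_{|\tau|}$ is a non-constant affine map; in particular $f(v_{\min})<f_{PL}(p)=l<f(v_{\max})$, and the straight segment from $p$ toward the lowest vertex $v_{\min}$ is a direction of strict decrease of $f_{PL}$. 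This descent direction is the geometric source of the non-criticality of $p$, and the hypothesis $k\ge 1$ is exactly what guarantees its existence.

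Next I would analyze the local structure of $|\Sigma|$ and of $f_{PL}$ near $p$. Since $\Sigma$ is a combinatorial $d$-manifold, $\overline{\sta(\tau)}=\tau\ast\link(\tau)$ with $|\link(\tau)|\cong S^{d-k-1}$, so a neighbourhood of $p$ is a $d$-ball. Writing a point of $\tau\ast\link(\tau)$ as $x=(1-s)\,a+s\,b$ with $a\in|\tau|$, $b\in|\link(\tau)|$ and $s\in[0,1]$, the piecewise-linearity of $f_{PL}$ on each coface of $\tau$ yields $f_{PL}(x)=(1-s)\,f_{PL}(a)+s\,f_{PL}(b)$. As $f_{PL}|_{|\tau|}$ has non-vanishing affine gradient, for $s$ small the variation of $f_{PL}$ in the $a$-direction stays bounded away from $0$; hence $f_{PL}$ has no critical behaviour at $p$, and its level sets foliate a neighbourhood $U$ of $p$ transversally to the descent direction. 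Using $f_{PL}$ itself as one coordinate, I would produce a homeomorphism of $U$ onto $\mathbb{R}^d$ carrying $p$ to the origin and $f_{PL}-l$ to a nonzero linear functional $x\mapsto x_1$, so that $U\cap\{f_{PL}\le l\}$ corresponds to the closed half-space $\{x_1\le 0\}$ with $p$ mapped to the boundary point $0$.

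Finally I would assemble the conclusion. By excision, the relative homology $H_i(|\Sigma|^l,|\Sigma|^l\setminus\{p\})$ is a local invariant at $p$, hence isomorphic to $H_i(U^l,U^l\setminus\{p\})$ with $U^l=U\cap\{f_{PL}\le l\}$, which under the homeomorphism above becomes $H_i(\{x_1\le 0\},\{x_1\le 0\}\setminus\{0\})$. The closed half-space is contractible, and so is the punctured half-space (push every point in the $-x_1$ direction to see that it is homotopy equivalent to a convex set), the inclusion being a homotopy equivalence; the long exact sequence of the pair then forces $H_i(U^l,U^l\setminus\{p\})=0$ in every degree, as claimed.

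The main obstacle is the middle step: turning the descent direction together with the join formula $f_{PL}(x)=(1-s)\,f_{PL}(a)+s\,f_{PL}(b)$ into an honest local normal form $(U,f_{PL})\cong(\mathbb{R}^d,x_1)$. One must check that the cone/join coordinates on $\tau\ast\link(\tau)$ combine with the affine coordinate along the descent direction into a genuine homeomorphism compatible with $f_{PL}$, i.e. that $p$ is a regular point of the PL function in the strong sense that the level sets give a product structure near $p$; once this is secured, the excision and homotopy arguments are routine.
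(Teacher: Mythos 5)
Your overall scaffolding (excise down to a neighbourhood of $p$, then show the local pair is acyclic) matches the paper's, and your opening observation---that $p\notin\Sigma_0$ forces $\dim(\tau)\ge 1$, hence some vertex of $\tau$ has value strictly below $l=f_{PL}(p)$---is exactly the fact the paper's proof ultimately exploits. But the middle step, which you yourself flag as ``the main obstacle'', is a genuine gap rather than a routine verification. You assert a local normal form $(U,f_{PL})\cong(\mathbb{R}^d,x_1)$, i.e.\ that the local sublevel set is a topological half-space with $p$ on its bounding hyperplane. Making $p$ a vertex by the stellar subdivision $\Gamma=p\ast\partial\tau\ast\link(\tau)$, the local sublevel set deformation retracts onto the cone from $p$ over $\link^{-}_{\Gamma}(p)=\gamma\ast\link^{-}(\tau)$, where $\gamma$ is the (nonempty, proper) face of $\tau$ spanned by its vertices of value at most $l$; your half-space claim therefore amounts to asserting that this lower link is a $(d-1)$-ball. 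All that is true in general is that it is a nonempty cone, hence contractible: $\link^{-}(\tau)$ is an essentially arbitrary subcomplex of the link (determined by which of its vertices happen to lie below $l$), and a cone over such a subcomplex has no reason to be a ball. Upgrading ``contractible'' to ``ball'' is a PL Schoenflies-type assertion, delicate already for $d\ge 4$, and it is nowhere needed: the long exact sequence of the pair only requires contractibility of the lower link, not a product structure for the level sets. The join formula $f_{PL}(x)=(1-s)f_{PL}(a)+sf_{PL}(b)$ does not rescue the foliation claim, because membership of $x$ in the sublevel set is governed by the competition between the signs of $f_{PL}(a)-l$ and $f_{PL}(b)-l$, and the latter varies over $\link(\tau)$; the level sets need not trivialize $U$ as a product.

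The repair is to aim for the weaker, purely homological statement, which is what the paper does: after excising to $\overline{\sta(\tau)}$ and subdividing stellarly at $\tau$, one obtains $H_i(|\Sigma|^l,|\Sigma|^l\setminus\{p\})\cong\tilde{H}_{i-1}(\link^{-}_{\Gamma}(p))$ via Equation (\ref{brehm}), Lemma \ref{equation}, and the deformation retraction of $|\Gamma|^l$ onto $|\Gamma^l|$; then $\link^{-}_{\Gamma}(p)=\gamma\ast\link^{-}(\tau)$ is a cone because $\gamma\neq\emptyset$ (this is where $p\notin\Sigma_0$ enters), so its reduced homology vanishes. Note also that this route never uses that $\Sigma$ is a manifold, whereas your argument invokes $|\link(\tau)|\cong S^{d-k-1}$, a fact that does not even follow immediately from the paper's definition of combinatorial manifold (only vertex links are assumed to be spheres).
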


\begin{proof}
Let $\sigma$ be the simplex of $\Sigma$ of minimal dimension containing $p$, and let us define $V=\overline{\sta(\sigma)}$.
Since $|\Sigma|^l \setminus |V|^l \subset |\Sigma|^l\setminus\{p\} \subset |\Sigma|^l$ such that the closure of $|\Sigma|^l \setminus |V|^l$ is contained in the interior of $|\Sigma|^l\setminus\{p\}$, by excision theorem, we have, for each $i\geq 0$,  $$H_i(|\Sigma|^l, |\Sigma|^l\setminus\{p\})\cong H_i(|V|^l, |V|^l\setminus\{p\}).$$
Let $\Gamma$ be the stellar subdivision of $V$ at face $\sigma$ such that $p$ is a vertex of $\Gamma$, i.e.,
$$\Gamma=p\ast\partial \sigma\ast\link(\sigma),$$
where $\partial \sigma$ is the set of all proper faces of $\sigma$ and $\ast$ is the usual join operation. Now, for each $i\geq 0$,
\begin{eqnarray*}
\begin{aligned}
	H_i(|V|^l, |V|^l\setminus\{p\})&\cong H_i(|\Gamma|^l, |\Gamma|^l\setminus\{p\})\\
	&\cong H_i(|\Gamma^l|, |\Gamma^l|\setminus\{p\})\\
	&\cong H_i(\overline{\sta_{\Gamma}^{-}(p)}, \link_{\Gamma}^{-}(p))\\
	&\cong \tilde{H}_{i-1}( \link_{\Gamma}^{-}(p)),
\end{aligned}
\end{eqnarray*}
where the first isomorphisms holds because $|V|=|\Gamma|$, the second one because $|\Gamma|^l$ deformation retracts onto $|\Gamma^l|$, the third one by Equation (\ref{brehm}), and the last one by Lemma \ref{equation}.

Now, since $\sta_{\Gamma}(p)= p\ast\partial \sigma\ast\link(\sigma)$, then $\link_{\Gamma}(p)= \partial \sigma\ast\link(\sigma)$.
Let $\gamma$ be the face of maximal dimension of $\sigma$ spanned by the vertices in $\link_{\Gamma}(p)$ with value less than or equal to $l$. Then, $\link_{\Gamma}^{-}(p)=\gamma\ast \link^{-}(\sigma)$. Because $p$ is not a vertex of $\Sigma$, the simplex $\gamma$ is non-empty.  Let $v$ be a vertex of $\gamma$ and $\beta$ be the maximal dimensional face of $\gamma$ which does not contain $v$ and might be empty.  Then, $\gamma= v\ast \beta$.  Since the join operation is associative, $\gamma\ast \link^{-}(\sigma) = v\ast \beta \ast \link^{-}(\sigma) $.  So, $\gamma\ast \link^{-}(\sigma)$ is a cone and therefore it is contractible.
In conclusion, $\tilde{H}_{i-1}( \link_{\Gamma}^{-}(p))\cong 0$, yielding the claim.
\qed \end{proof}

\section{Relating PL and discrete critical sets}\label{sec:critical-correspondence}

The aim of this section is to investigate under which conditions it is possible to establish a correspondence between the set of the PL critical points of a function $f$ on a domain $\Sigma$ and the set of the discrete critical simplices of a discrete gradient vector field on $\Sigma$.
For the rest of the section, unless differently specified, we assume that $\Sigma$ is a combinatorial $d$-manifold,  $f:\Sigma_0 \rightarrow \mathbb{R}$ is an injective function defined on the vertices of $\Sigma$, and  $V$ is a discrete gradient vector field on $\Sigma$.

\subsection{An explicit correspondence}\label{subsec:critical-correspondence}
%
Given a value $l$ in the image of $f$, let us denote by $l'$ the greatest value in the image of $f$ among the ones strictly lower than $l$, if any, and $l':=l-1$, otherwise.  The number $\beta_i(\Sigma^l, \Sigma^{l'})$ denotes the number of variations in the $i^{th}$ homology space occurred at value $l$.\\ 
Equivalently, a simple calculation shows that, denoting by $\phi^l_i$ the homology map in dimension $i$ induced by the inclusion of $\Sigma^{l'}$ into $\Sigma^{l}$, $\beta_i(\Sigma^l, \Sigma^{l'}) = \dim(\ker \, \phi^l_{i-1}) + \dim(\coker \, \phi^l_i)$.

\begin{definition}
  A discrete gradient vector field $V$ on $\Sigma$ is called {\em relatively perfect} (briefly, {\em RP}) w.r.t. $f:\Sigma_0 \rightarrow \mathbb{R}$ if  $m^l_i(V) = \beta_i(\Sigma^l, \Sigma^{l'})$, for every $i\in \mathbb{N}$ and every value $l \in \Ima \, f$,
  where $m^l_i(V)$ denotes the number of discrete critical $i$-simplices for $V$ in $\Sigma^l\setminus \Sigma^{l'}$. 
\end{definition}

Whenever there is no ambiguity about the considered discrete gradient vector field, we will write $m^l_i$ in place of $m^l_i(V)$.

The above definition and the equivalence shown in the previous subsection (Theorem \ref{thm:eq}) enable us to find a correspondence between PL critical points and discrete critical simplices.

\begin{theorem}\label{proposition-corrispondenza}
  Let  $V$ be a relatively perfect discrete gradient vector field on $\Sigma$  w.r.t. $f$. Then, a vertex $v\in\Sigma_0$ is a PL critical point of index $i$ and multiplicity $k_i$ of $f$  if and only if there are exactly $k_i$ discrete critical $i$-simplices $\sigma$ of $V$ such that $\sigma\in\sta(v)$ and $f_{max}(\sigma)=f(v)$.
\end{theorem}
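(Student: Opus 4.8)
The plan is to collapse the whole statement into a single equality of integers: for each index $i$, the number of discrete critical $i$-simplices of $V$ lying in the lower star of $v$ equals $\beta_i(\overline{\sta^{-}(v)}, \link^{-}(v))$, which by Equation (\ref{brehm}) is exactly the multiplicity $k_i$ of $v$ as an $H$-critical point of index $i$. Since the claim is a biconditional phrased in terms of indices and multiplicities, and since $k_i$ may legitimately be $0$ (covering the regular and the "not critical at index $i$" cases), proving this equality for every $i$ yields both directions at once.

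First I would set $l := f(v)$ and unwind the right-hand count. By injectivity of $f$, the vertex $v$ is the unique preimage of $l$, so a simplex $\sigma$ satisfies $f_{max}(\sigma) = f(v)$ precisely when $v$ is the highest vertex of $\sigma$; in particular $v$ is then a vertex of $\sigma$, whence $\sigma \in \sta(v)$ automatically. Therefore the set of simplices $\sigma$ with $\sigma \in \sta(v)$ and $f_{max}(\sigma) = f(v)$ is exactly the lower star $\sta^{-}(v)$, and it coincides with the set difference $\Sigma^l \setminus \Sigma^{l'}$. Consequently, the quantity counted on the right-hand side of the statement is precisely $m^l_i(V)$, the number of discrete critical $i$-simplices of $V$ in $\Sigma^l \setminus \Sigma^{l'}$. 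Invoking the relative perfectness of $V$ then gives $m^l_i(V) = \beta_i(\Sigma^l, \Sigma^{l'})$.

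The remaining, and central, step is to identify $\beta_i(\Sigma^l, \Sigma^{l'})$ with $\beta_i(\overline{\sta^{-}(v)}, \link^{-}(v))$. For this I would exhibit the decomposition $\Sigma^l = \Sigma^{l'} \cup \overline{\sta^{-}(v)}$ as a union of subcomplexes and verify that $\Sigma^{l'} \cap \overline{\sta^{-}(v)} = \link^{-}(v)$: a simplex of $\overline{\sta^{-}(v)}$ either has highest vertex $v$, hence lies in $\sta^{-}(v)$ with $f_{max} = l$ and is absent from $\Sigma^{l'}$, or does not contain $v$, hence lies in $\link^{-}(v)$ with $f_{max} \le l'$. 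The standard simplicial excision isomorphism $H_*(A \cup B, A) \cong H_*(B, A \cap B)$, applied with $A = \Sigma^{l'}$ and $B = \overline{\sta^{-}(v)}$, then yields $H_i(\Sigma^l, \Sigma^{l'}) \cong H_i(\overline{\sta^{-}(v)}, \link^{-}(v))$ for every $i$. Chaining the three identifications shows that the right-hand count equals $\beta_i(\overline{\sta^{-}(v)}, \link^{-}(v))$, which by Equation (\ref{brehm}) equals $\beta_i(|\Sigma^l|, |\Sigma^l| \setminus \{v\}) = k_i$; by Theorem \ref{thm:eq} this number is the well-defined multiplicity of $v$ at index $i$ irrespective of which definition of PL criticality one adopts.

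I expect the main obstacle to be the bookkeeping in the excision step: one must check carefully that $\Sigma^l$, $\Sigma^{l'}$, $\overline{\sta^{-}(v)}$ and $\link^{-}(v)$ are genuine subcomplexes realizing the claimed union and intersection, so that excision applies directly, and that the resulting isomorphism respects the grading by $i$ (so that indices and multiplicities, and not merely total critical counts, are matched). Once this is in place, the result follows by a direct composition of the relative-perfectness hypothesis, the excision isomorphism, and the equivalences of PL criticality already established.
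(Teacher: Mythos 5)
Your proof is correct and follows essentially the same route as the paper: both reduce the count on the right-hand side to $m^l_i(V)$, apply relative perfectness to get $\beta_i(\Sigma^l,\Sigma^{l'})$, and identify this with $k_i$ via the identification of $\Sigma^l\setminus\Sigma^{l'}$ with the lower star of $v$ and the isomorphisms of Equation (\ref{brehm}). The excision step you spell out is in substance one of the isomorphisms already recorded in Equation (\ref{brehm}), so no new ingredient is needed.
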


\begin{proof}
  By  definition, a vertex $v$ is  PL critical  of index $i$ and multiplicity $k_i$   if and only if, setting
  $l=f(v)$,
  \begin{eqnarray*}
    \dim(H_j(|\Sigma^l|, |\Sigma^l|\setminus\{v\}))=
    \begin{cases}
      k_i & \text{ for } j=i,\\
      0 & \text{ otherwise.}
    \end{cases}
  \end{eqnarray*}
  Note that   $\Sigma^{l'}=\Sigma^l\setminus \sta(v)$.  Indeed,   $\sigma\in\sta(v)$ implies $f_{max}(\sigma)\ge f(v)=l$. So, on one hand, $\Sigma^{l'}\subseteq \Sigma^l\setminus \sta(v)$ because $l'<l$. On the other hand,  for each $\sigma\in \Sigma^l\setminus \sta(v)$, $f_{\max}(\sigma)<l$ by the injectivity of $f$, yielding $f_{\max}(\sigma)\le l'$ by definition of $l'$. Thus, $\dim(H_j(|\Sigma^l|, |\Sigma^l|\setminus\{v\}))= \beta_j(\Sigma^l, \Sigma^{l'})=m^l_j$, for every $j\in \mathbb{Z}$,
  where the first equality follows from Equation (\ref{brehm})  and the second equality from relatively perfectness of $V$ w.r.t. $f$.
  Thus, the claim follows by recalling that $m^l_j$ is the number of discrete critical $j$-simplices $\sigma$ of $V$ such that $\sigma\in \Sigma^l\setminus \Sigma^{l'}$, $\Sigma^l\setminus \Sigma^{l'}=\sta(v)$, and   $l=f(v)$.
\qed \end{proof}

As a consequence of the above results, we can give the following corollary.

\begin{corollary}\label{cor:arbitraryDIM}
  Let $V$ be a discrete gradient vector field on $\Sigma$ relatively perfect w.r.t. $f$. Then, there is a 1-to-$k_i$ correspondence between  PL critical points of index $i$ and multiplicity $k_i$ of $f$  and discrete critical $i$-simplices $\sigma$ of $V$ such that  $f_{max}(\sigma)=f(v)$. In particular, if $f$ is PL Morse, then the correspondence is bijective.
\end{corollary}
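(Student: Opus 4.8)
The plan is to derive the corollary almost directly from Theorem~\ref{proposition-corrispondenza} by repackaging its biconditional as an explicit fibered map and then counting fibers. The key device is the assignment $\Phi$ that sends each discrete critical $i$-simplex $\sigma$ of $V$ to the vertex $v$ of $\sigma$ at which $f_{max}(\sigma)$ is attained. Since $f$ is injective, this maximizing vertex is unique, so $\Phi$ is well defined; moreover $v$ is a face of $\sigma$, whence $\sigma\in\sta(v)$, and by construction $f_{max}(\sigma)=f(v)$. Thus the condition $\Phi(\sigma)=v$ holds precisely when $\sigma$ is one of the simplices counted by Theorem~\ref{proposition-corrispondenza} at the vertex $v$.

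Next I would read off the fibers of $\Phi$. By Theorem~\ref{proposition-corrispondenza}, for every vertex $v$ the number of discrete critical $i$-simplices $\sigma$ with $\sigma\in\sta(v)$ and $f_{max}(\sigma)=f(v)$ equals the multiplicity $k_i$ of $v$ as a PL critical point of index $i$, with the convention that $k_i=0$ when $v$ is not PL critical in index $i$. Hence the fiber $\Phi^{-1}(v)$ has cardinality $k_i$: it is empty unless $v$ is a PL critical point of index $i$, in which case it contains exactly $k_i$ simplices. Because each discrete critical $i$-simplex lies in exactly one fiber, these fibers partition the set of discrete critical $i$-simplices and are indexed precisely by the PL critical points of index $i$ (the vertices with $k_i>0$). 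This is exactly the asserted 1-to-$k_i$ correspondence.

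For the PL Morse case I would then invoke Theorem~\ref{thm:eq}, which identifies the $H$- and $L$-notions with the same index and multiplicity; since an $L$-critical point has nonzero reduced Betti number in a single dimension, each PL critical point is critical in exactly one index $i$, and PL Morse forces $k_i=1$ there. Consequently every nonempty fiber of $\Phi$ is a singleton, so $\Phi$ is a bijection from the set of all discrete critical simplices onto the set of all PL critical points, matching indices.

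The only point that needs care is confirming that $\Phi$ maps \emph{into} the PL critical points of index $i$ and hits all of them, i.e.\ that the highest vertex of any discrete critical $i$-simplex is itself PL critical of index $i$. This is immediate from the contrapositive of Theorem~\ref{proposition-corrispondenza}: if $\Phi(\sigma)=v$ with $v$ not PL critical of index $i$, then $k_i=0$ for $v$, so no such $\sigma$ could exist. Everything else is bookkeeping layered on top of the theorem, so I do not expect a genuine obstacle here.
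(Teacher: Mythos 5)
Your proposal is correct and follows the same route as the paper, which states this corollary as an immediate consequence of Theorem~\ref{proposition-corrispondenza} without a separate argument; your explicit ``top-vertex'' map $\Phi$ is just the natural unpacking of that theorem's counting statement into a fibered correspondence, and your use of the contrapositive to handle the $k_i=0$ case is exactly what makes the bookkeeping rigorous. The observation that $\Phi(\sigma)$ is a vertex of $\sigma$ also recovers the locality claim of Remark~\ref{rem:arbitraryDIM}, so nothing is missing.
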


	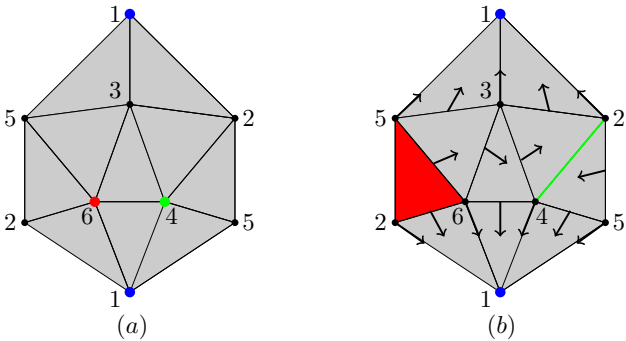
\begin{figure}[hbtp]
		\centering
		\resizebox{0.6\textwidth}{!}{%
			\begin{tikzpicture}
			\draw[fill=black!20] (0,0)--(0,1.5)--(1,0.3);
			\draw[fill=black!20] (1,0.3)--(2,0.3)--(1.5,1.7);
			\draw[fill=black!20] (2,0.3)--(3,0)--(3,1.5);
			\draw[fill=black!20] (2,0.3)--(3,1.5)--(1.5,1.7);
			\draw[fill=black!20] (1.5,1.7)--(3,1.5)--(1.5,3);
			\draw[fill=black!20] (0,1.5)--(1.5,1.7)--(1.5,3);
			\draw[fill=black!20] (0,1.5)--(1,0.3)--(1.5,1.7);
			\draw[fill=black!20] (0,0)--(1,0.3)--(1.5,-1);
			\draw[fill=black!20] (2,0.3)--(1,0.3)--(1.5,-1);
			\draw[fill=black!20] (2,0.3)--(3,0)--(1.5,-1);
			\draw (0,0)--(0,1.5);
			\draw (0,1.5)--(1.5,3);
			\draw (1.5,3)--(3,1.5);
			\draw (3,1.5) -- (3,0);
			\draw (3,0) -- (1.5,-1);
			\draw (1.5,-1) -- (0,0);
			\draw (1.5,3) -- (1.5,1.7);
			\draw (1.5,1.7) -- (0,1.5);
			\draw (1.5,1.7) -- (3,1.5);
			\draw (1,0.3) -- (0,0);
			\draw (1,0.3) -- (0,1.5);
			\draw (1,0.3) -- (1.5,1.7);
			\draw (1,0.3) -- (2,0.3);
			\draw (2,0.3) -- (1.5,1.7);
			\draw (2,0.3) -- (3,1.5);
			\draw (2,0.3) -- (3,0);
			\draw (2,0.3) -- (1.5,-1);
			\draw (1,0.3) -- (1.5,-1);

			\draw (0,1.5) node[circle,fill,inner sep=1pt] {};
			\node[inner sep=0,anchor=west,text width=3.3cm] (note1) at (-0.3,1.5) {$5$};
			\draw (0,0) node[circle,fill,inner sep=1pt] {};
			\node[inner sep=0,anchor=west,text width=3.3cm] (note1) at (-0.3,0) {$2$};
			\draw (1.5,3) node[circle,fill,blue,inner sep=1.5pt] {};
			\node[inner sep=0,anchor=west,text width=3.3cm] (note1) at (1.2,3) {$1$};

			\draw (3,1.5) node[circle,fill,inner sep=1pt] {};
			\node[inner sep=0,anchor=west,text width=3.3cm] (note1) at (3.1,1.5) {$2$};
			\draw (3,0) node[circle,fill,inner sep=1pt] {};
			\node[inner sep=0,anchor=west,text width=3.3cm] (note1) at (3.1,0) {$5$};
			\draw (1.5,-1) node[circle,fill,blue,inner sep=1.5pt] {};
			\node[inner sep=0,anchor=west,text width=3.3cm] (note1) at (1.2,-1.1) {$1$};
			\draw (1,0.3) node[circle,fill,red,inner sep=1.5pt] {};
			\node[inner sep=0,anchor=west,text width=3.3cm] (note1) at (0.8,0.1) {$6$};
			\draw (2,0.3) node[circle,fill,green,inner sep=1.5pt] {};
			\node[inner sep=0,anchor=west,text width=3.3cm] (note1) at (2,0.09) {$4$};
			\draw (1.5,1.7) node[circle,fill,inner sep=1pt] {};
			\node[inner sep=0,anchor=west,text width=3.3cm] (note1) at (1.2,1.9) {$3$};
			\node[inner sep=0,anchor=west,text width=3.3cm] (note1) at (1.3,-1.5) {$(a)$};

			\end{tikzpicture}
			\hspace{-2cm}
			\quad

			\begin{tikzpicture}
			\draw[fill=red] (0,0)--(0,1.5)--(1,0.3);
			\draw[fill=black!20] (1,0.3)--(2,0.3)--(1.5,1.7);
			\draw[fill=black!20] (2,0.3)--(3,0)--(3,1.5);
			\draw[fill=black!20] (2,0.3)--(3,1.5)--(1.5,1.7);
			\draw[fill=black!20] (1.5,1.7)--(3,1.5)--(1.5,3);
			\draw[fill=black!20] (0,1.5)--(1.5,1.7)--(1.5,3);
			\draw[fill=black!20] (0,1.5)--(1,0.3)--(1.5,1.7);
			\draw[fill=black!20] (0,0)--(1,0.3)--(1.5,-1);
			\draw[fill=black!20] (2,0.3)--(1,0.3)--(1.5,-1);
			\draw[fill=black!20] (2,0.3)--(3,0)--(1.5,-1);
			\draw (0,0)--(0,1.5);
			\draw (0,1.5)--(1.5,3);
			\draw (1.5,3)--(3,1.5);
			\draw (3,1.5) -- (3,0);
			\draw (3,0) -- (1.5,-1);
			\draw (1.5,-1) -- (0,0);
			\draw (1.5,3) -- (1.5,1.7);
			\draw (1.5,1.7) -- (0,1.5);
			\draw (1.5,1.7) -- (3,1.5);
			\draw (1,0.3) -- (0,0);
			\draw (1,0.3) -- (0,1.5);
			\draw (1,0.3) -- (1.5,1.7);
			\draw (1,0.3) -- (2,0.3);
			\draw (2,0.3) -- (1.5,1.7);
			\draw [thick,green](2,0.3) -- (3,1.5);
			\draw (2,0.3) -- (3,0);
			\draw (2,0.3) -- (1.5,-1);
			\draw (1,0.3) -- (1.5,-1);
			\draw[thick,->] (2,0.3) -- (1.8,-0.2);
			\draw[thick,->] (1,0.3) -- (1.2,-0.2);

			\draw (0,1.5) node[circle,fill,inner sep=1pt] {};
			\node[inner sep=0,anchor=west,text width=3.3cm] (note1) at (-0.3,1.5) {$5$};
			\draw (0,0) node[circle,fill,inner sep=1pt] {};
			\node[inner sep=0,anchor=west,text width=3.3cm] (note1) at (-0.3,0) {$2$};
			\draw (1.5,3) node[circle,fill,blue,inner sep=1.5pt] {};
			\node[inner sep=0,anchor=west,text width=3.3cm] (note1) at (1.2,3) {$1$};

			\draw (3,1.5) node[circle,fill,inner sep=1pt] {};
			\node[inner sep=0,anchor=west,text width=3.3cm] (note1) at (3.1,1.5) {$2$};
			\draw (3,0) node[circle,fill,inner sep=1pt] {};
			\node[inner sep=0,anchor=west,text width=3.3cm] (note1) at (3.1,0) {$5$};
			\draw (1.5,-1) node[circle,fill,blue,inner sep=1.5pt] {};
			\node[inner sep=0,anchor=west,text width=3.3cm] (note1) at (1.2,-1.1) {$1$};
			\draw (1,0.3) node[circle,fill,inner sep=1pt] {};
			\node[inner sep=0,anchor=west,text width=3.3cm] (note1) at (0.8,0.1) {$6$};
			\draw (2,0.3) node[circle,fill,inner sep=1pt] {};
			\node[inner sep=0,anchor=west,text width=3.3cm] (note1) at (2,0.09) {$4$};
			\draw (1.5,1.7) node[circle,fill,inner sep=1pt] {};
			\node[inner sep=0,anchor=west,text width=3.3cm] (note1) at (1.2,1.9) {$3$};
			\draw[thick,->] (3,0.75) -- (2.6,0.65);
			\draw[thick,->] (2.5,0.17) -- (2.28,-0.2);

			\draw[thick,->] (1.5,0.3) -- (1.5,-0.2);
			\draw[thick,->] (0.5,0.17) -- (0.7,-0.2);
			\draw[thick,->] (0,0) -- (0.42,-0.3);
			\draw[thick,->] (3,0) -- (2.59,-0.3);
			\draw[thick,->] (0.55,0.85) -- (0.9,1);
			\draw[thick,->] (1.27,1.08) -- (1.6,0.85);
			\draw[thick,->] (1.8,0.9) -- (2.1,1.1);
			\draw[thick,->] (0,1.5) -- (0.37,1.85);
			\draw[thick,->] (0.75,1.6) -- (0.95,1.95);
			\draw[thick,->] (1.5,1.7) -- (1.5,2.2);
			\draw[thick,->] (2.2,1.6) -- (2.1,2);
			\draw[thick,->] (3,1.5) -- (2.6,1.9);
			\node[inner sep=0,anchor=west,text width=3.3cm] (note1) at (1.3,-1.5) {$(b)$};
			\end{tikzpicture}
		}%
		\caption{$(a)$ An injective scalar function $f$ defined on the vertices of the real projective space $\mathbb{RP}^2$. $(b)$ An RP discrete gradient vector field with respect to $f$ on $\mathbb{RP}^2$. }
		\label{fig:1-1 correspondence}
	\end{figure}

\begin{remark}\label{rem:arbitraryDIM}
  It is worth to be noticed that the above corollary ensures also that a PL critical point $v$ of $f$ and the discrete critical $i$-simplices in correspondence with $v$ are closely located. More precisely, given any discrete critical $i$-simplex $\sigma$ in correspondence with $v$, we have that $\sigma$ belongs to $\sta(v)$ or, equivalently, that $v$ is a vertex of $\sigma$.
\end{remark}

An example for Corollary \ref{cor:arbitraryDIM} (also for Corollary \ref{cor:lowDIM}) is given in Figure \ref{fig:1-1 correspondence}.  We consider the real projective space $\mathbb{RP}^2$ endowed with an injective scalar function $f$ defined on its vertices depicted in Figure \ref{fig:1-1 correspondence}$(a)$.  In Figure \ref{fig:1-1 correspondence}$(a)$, the vertices marked in blue, green and red represent the minimum, saddle, and maximum critical points of the PL Morse function $f$, respectively.  In Figure \ref{fig:1-1 correspondence}$(b)$, we depict the RP discrete gradient vector field with respect to $f$ on $\mathbb{RP}^2$, which is obtained according to the algorithm given in Theorem \ref{theorem RP}, and we highlight the minimum, saddle and maximum discrete critical simplices in blue, green and red, respectively.  We see that PL minimum, saddle and maximum critical points are in $1$-to-$1$ correspondence with the minimum, saddle and maximum discrete critical simplices, respectively.  Moreover, as claimed by Remark \ref{rem:arbitraryDIM}, each discrete critical simplex belongs to the star of the corresponding PL critical point.

\subsection{Construction of RP discrete gradient vector fields}\label{subsec:existence}

In this subsection, we prove that, for combinatiorial manifolds of dimension $d\leq 3$, the existence of an RP discrete gradient vector field (and, consequently, a correspondence between PL and discrete critical sets) is always ensured.


\begin{lemma}\label{freeface}
Let $\Sigma$ be a simplicial complex of dimension $2$ such that $|\Sigma|\subsetneq S^2$.  Then, there exists a triangle in $\Sigma$ admitting a free face (i.e., an edge belonging to exactly one triangle of $\Sigma$).
\end{lemma}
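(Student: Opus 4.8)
The plan is to argue by contradiction: assume no triangle of $\Sigma$ has a free face, i.e.\ every edge that is the face of some triangle is the face of at least two triangles. Since $\dim \Sigma = 2$, the complex contains at least one triangle, so this assumption is not vacuous. The goal is then to manufacture a nontrivial top-dimensional homology class of $|\Sigma|$, contradicting the fact that $|\Sigma|$ is a proper compact subset of $S^2$.

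First I would exploit the embedding $|\Sigma|\subseteq S^2$ to bound from above the number of triangles sharing an edge. Fixing an interior point $x$ of an edge $e$ that is a face of a triangle, a small neighbourhood of $x$ in $S^2$ is a disk cut by $e$ into two half-disks; since the open triangles of $\Sigma$ are pairwise disjoint in $|\Sigma|$, each triangle containing $e$ must fill, near $x$, exactly one of the two sides, and no two triangles can fill the same side (their interiors would overlap in a strip along $e$). Hence every edge is a face of \emph{at most} two triangles, and, combined with the standing assumption, every edge that is a face of a triangle is a face of \emph{exactly} two triangles.

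With this parity in hand, I would consider the $\mathbb{Z}/2$-chain $z := \sum_{t} t$, the sum of all triangles of $\Sigma$. Each edge occurs in $\partial z$ with multiplicity equal to the number of triangles containing it, which is $0$ modulo $2$; thus $\partial z = 0$. As $z \neq 0$ and $\Sigma$ has no $3$-simplices, $z$ represents a nonzero class in $H_2(|\Sigma|;\mathbb{Z}/2)$. On the other hand, because $|\Sigma| \subsetneq S^2$, the set $|\Sigma|$ is a proper compact subset of the sphere, so Alexander duality gives $\tilde{H}_2(|\Sigma|;\mathbb{Z}/2) \cong \tilde{H}^{-1}(S^2 \setminus |\Sigma|;\mathbb{Z}/2) = 0$ (equivalently, removing a point of the nonempty complement embeds $|\Sigma|$ in the plane, which carries no $2$-cycles). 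This contradiction completes the argument.

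The step I expect to be the main obstacle is the upper bound of two triangles per edge: it is the only place where the hypothesis $|\Sigma|\subseteq S^2$ — rather than an abstract $2$-complex — is genuinely used, and it must be phrased carefully enough to rule out several triangles meeting an edge from the same side. The vanishing of $H_2(|\Sigma|;\mathbb{Z}/2)$ is then a standard consequence of properness in $S^2$.
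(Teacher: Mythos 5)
Your proof is correct, and its first half coincides with the paper's: both arguments assume no triangle has a free face, conclude that every edge of a triangle lies in exactly two triangles, and deduce that the sum of all triangles is a nontrivial $2$-cycle, so $H_2(|\Sigma|)\neq 0$. You are in fact more careful here than the paper, which asserts the ``exactly two'' count directly from the no-free-face hypothesis; as you observe, that hypothesis only gives \emph{at least} two, and the upper bound genuinely requires the embedding in $S^2$ (your half-disk argument at an interior point of the edge). Where the two proofs diverge is in how they refute $H_2(|\Sigma|)\neq 0$. The paper introduces a complex $\Sigma'$ with $|\Sigma'|=\overline{S^2\setminus|\Sigma|}$, argues that $\dim H_2(|\Sigma'|)\geq 1$ as well, and applies Mayer--Vietoris to $S^2=|\Sigma|\cup|\Sigma'|$: since $|\Sigma|\cap|\Sigma'|$ is at most $1$-dimensional, $H_2(|\Sigma|)\oplus H_2(|\Sigma'|)$ would inject into the $1$-dimensional space $H_2(S^2)$, a contradiction. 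You instead invoke the vanishing of $\tilde H_2$ for a proper compact subset of $S^2$. Your route is shorter and sidesteps the paper's claim that $\dim H_2(\Sigma')\geq 1$, which the paper justifies only by ``similarly to the case of $\Sigma$'' even though $\Sigma'$ is not assumed free of free faces. One notational quibble: the duality you want reads $\check H^2(|\Sigma|)\cong\tilde H_{-1}(S^2\setminus|\Sigma|)=0$, i.e.\ reduced \emph{homology} of the nonempty complement in degree $-1$, rather than $\tilde H^{-1}$ of the complement; but your parenthetical observation that removing a point of the complement embeds $|\Sigma|$ in the plane already yields the vanishing elementarily, so nothing is lost.
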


\begin{proof}
Let us suppose that none of the triangles in  $\Sigma$ admits a free face.  Then, every face of each triangle in $\Sigma$ belongs to exactly two triangles, and the collection of all triangles forms at least one $2$-cycle in $\Sigma$ which is not a boundary.  Thus, $\dim(H_2(\Sigma)) \geq 1$.  Let $\Sigma'$ be a simplicial complex such that $|\Sigma'|= \overline{S^2 \setminus |\Sigma|}$.  Since $S^2=|\Sigma'| \cup |\Sigma|$,  $|\Sigma|\subsetneq S^2$ and $\Sigma$ is a simplicial complex of dimension $2$, $\Sigma'$ is a $2$-dimensional simplicial complex such that  $|\Sigma'|\subsetneq S^2$.  Similarly to the case of $\Sigma$, $\dim(H_2(\Sigma')) \geq 1$.  Thus, we get  $\dim(H_{2}(|\Sigma'|)\oplus H_{2}(|\Sigma|))\geq 2.$
Thanks to Mayer-Vietoris sequence for homology, we have the following long exact sequence
$$
0 \rightarrow  H_2(|\Sigma'| \cap |\Sigma|)\rightarrow H_{2}(|\Sigma'|)\oplus H_{2}(|\Sigma|)\xrightarrow{\phi} H_{2}(S^2) \rightarrow \cdots
$$
Moreover, since $\dim(|\Sigma'| \cap |\Sigma|)\leq 1$, $H_2(|\Sigma'| \cap |\Sigma|)=0$.
Thus, we get that the map $\phi: H_{2}(|\Sigma'|)\oplus H_{2}(|\Sigma|) \rightarrow H_{2}(S^2)$ is injective which is clearly not possible since $\dim(H_{2}(|\Sigma'|)\oplus H_{2}(|\Sigma|))\geq 2$ and $\dim(H_{2}(S^2))=1.$
\qed \end{proof}

\begin{lemma}\label{pgvf}
Let $\Sigma$ be a simplicial complex such that $|\Sigma|\subseteq S^2$.  Then, $\Sigma$ admits a perfect discrete gradient vector field.
\end{lemma}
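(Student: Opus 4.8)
The plan is to split into the two cases $|\Sigma| \subsetneq S^2$ and $|\Sigma| = S^2$, handling the strict inclusion first and then reducing the sphere to it.

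First, suppose $|\Sigma| \subsetneq S^2$. I would repeatedly invoke Lemma~\ref{freeface}: as long as $\Sigma$ still contains a triangle it is a $2$-dimensional complex with $|\Sigma| \subsetneq S^2$, so some triangle $\tau$ has a free edge $e$, and I pair $(e,\tau)$ and perform the elementary collapse removing both. Every intermediate complex $\Sigma_k$ obtained this way satisfies $|\Sigma_k| \subseteq |\Sigma| \subsetneq S^2$ and stays $2$-dimensional until its last triangle is removed, so Lemma~\ref{freeface} keeps applying and the process terminates with a $1$-dimensional subcomplex $\Gamma$. Because each step is an elementary collapse, $\Gamma$ is homotopy equivalent to $\Sigma$, whence $\beta_i(\Gamma) = \beta_i(\Sigma)$ for all $i$; in particular $\beta_2(\Sigma) = \beta_2(\Gamma) = 0$, since $\Gamma$ carries no $2$-cells. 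The collapse pairs $(e,\tau)$ form the part $V_2$ of the gradient field covering all triangles and leaving none critical.

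On the graph $\Gamma$ I would then place the standard perfect gradient field $V_1$: choose a spanning forest, root each tree, and pair every non-root vertex with the edge joining it to its parent. This leaves exactly the roots as critical $0$-cells and the off-forest edges as critical $1$-cells, so $V_1$ has $\beta_0(\Gamma) = \beta_0(\Sigma)$ critical vertices, $\beta_1(\Gamma) = \beta_1(\Sigma)$ critical edges, and is acyclic. Setting $V = V_1 \cup V_2$, the two parts act in different dimensions and use disjoint simplices, so $V$ is a well-defined discrete gradient field whose critical counts are $m_0 = \beta_0(\Sigma)$, $m_1 = \beta_1(\Sigma)$, and $m_2 = 0 = \beta_2(\Sigma)$; hence $V$ is perfect. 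For the remaining case $|\Sigma| = S^2$, I would pick any triangle $\tau_0$, declare it critical, and apply the previous construction to $\Sigma \setminus \{\tau_0\}$, whose underlying space ($S^2$ minus an open disk) is a proper subset of $S^2$ deformation retracting to a point. The perfect field produced there has a single critical $0$-cell and no other critical cells; adjoining $\tau_0$ as the unique critical $2$-cell yields $m_0 = 1$, $m_1 = 0$, $m_2 = 1$, matching the Betti numbers of $S^2$.

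The two points needing care are that the collapsing process really exhausts all the triangles and that the assembled field is acyclic. The first is exactly what repeated application of Lemma~\ref{freeface} guarantees. For the second, a sequence of elementary collapses is an acyclic matching, the graph field is acyclic within dimension $0$, these involve disjoint simplices in different dimensions so no closed $V$-path can mix them, and an unmatched top simplex such as $\tau_0$ never lies on any $V$-path; thus I expect the verification of acyclicity of $V$ to be the only genuinely delicate step.
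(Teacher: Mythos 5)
Your proof is correct, and for the main case $|\Sigma|\subsetneq S^2$ it takes the same route as the paper: iterate Lemma~\ref{freeface} to collapse away all triangles, note that the collapses preserve homology (so in particular $\beta_2(\Sigma)=0$), and put a perfect gradient field on the residual $1$-complex. Where you genuinely diverge is in how the base cases are discharged. The paper outsources both the $1$-dimensional case and the case $|\Sigma|=S^2$ to the literature (\cite{Lewiner2003}), whereas you make both self-contained: the spanning-forest matching handles graphs explicitly, and the sphere is reduced to the proper-subset case by deleting a single maximal triangle $\tau_0$, collapsing the resulting closed disk down to a point, and declaring $\tau_0$ critical, which yields $(m_0,m_1,m_2)=(1,0,1)$, the Betti numbers of $S^2$. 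This buys a proof depending only on Lemma~\ref{freeface}, at the cost of having to verify acyclicity of the assembled field yourself --- which you do correctly: the collapse pairs are ordered by the step at which they are removed so any $V$-path among them strictly advances, the forest pairs strictly decrease depth toward a root, the two families live in disjoint dimension ranges so no closed path can mix them, and an unmatched simplex such as $\tau_0$ never occurs as the second entry of a pair in a $V$-path. You also correctly observe the point that makes the iteration legitimate, namely that every intermediate complex has underlying space contained in the original one (respectively in $S^2$ minus an open disk), so the hypothesis of Lemma~\ref{freeface} persists until the last triangle is gone.
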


\begin{proof}
We can assume, without loss of generality, that $\Sigma$ is connected. If it is not the case, the lemma can be proved by considering each component separately.
If $\dim(\Sigma)\leq1$, then $\Sigma$ admits a perfect discrete gradient vector field by \cite{Lewiner2003}.  If $\dim(\Sigma)=2$, then there are two cases: either $|\Sigma|= S^2$ or $|\Sigma|\subsetneq S^2$.
\begin{enumerate}
  \item If $|\Sigma|= S^2$, then $\Sigma$ admits a perfect discrete gradient vector field by \cite{Lewiner2003}.
  \item If $|\Sigma| \subsetneq S^2$, then $\Sigma$ admits a triangle $\tau_1$ with a free face $\sigma_1$ by Lemma \ref{freeface}.  Let $\Sigma^1= \Sigma \setminus \{\sigma_1, \tau_1\}$.
  If $\dim(\Sigma^1) \leq 1$, then $\Sigma^1$ admits a perfect discrete gradient vector field $V^1$ by \cite{Lewiner2003}.  Since $\Sigma=\Sigma^1 \cup \{\sigma_1, \tau_1\}$ and removal of the pair $\{\sigma_1, \tau_1\}$ is a collapse, then $\Sigma$ and $\Sigma^1$ have isomorphic homology groups.
  Thus, $V=V^1\cup \{(\sigma_1, \tau_1)\}$ is a perfect discrete gradient vector field on $\Sigma$ with $m_i(V)=m_i(V^1)$ for all $i\geq0$.
  If $\dim(\Sigma^1)=2$, then we remove pairs of simplices $\{\sigma_j, \tau_j\}$ successively from $\Sigma^1$ (which is possible by Lemma \ref{freeface}) up to getting a subcomplex  $\Sigma^{n}$ of $\Sigma^1$ such that $\dim(\Sigma^n) \leq 1$ and $\Sigma^1=\Sigma^{n} \cup \{\sigma_2, \tau_2, \ldots, \sigma_n, \tau_n\}$ where $\{\sigma_j, \tau_j\}$ is the edge-triangle pair removed at the $j^{th}$ step of the successive operation.
  Since each removal of a pair of simplices $\{\sigma_j, \tau_j\}$ is a collapse, $\Sigma^1$ and $\Sigma^{n}$ have isomorphic homology groups.  Let $V^n$ be a perfect discrete gradient vector field on $\Sigma^{n}$.  Then, $V^1=V^n \cup \{(\sigma_2, \tau_2), \ldots, (\sigma_n, \tau_n)\}$ is a perfect discrete gradient vector field on $\Sigma^1$ with $m_i(V^1)=m_i(V^n)$ for all $i\geq0$.  Since $\Sigma=\Sigma^1 \cup \{\sigma_1, \tau_1\}$, $\Sigma$ and $\Sigma^1$ have isomorphic homology groups.  So, $V=V^1\cup \{(\sigma_1, \tau_1)\}$ is a perfect discrete gradient vector field on $\Sigma$ with $m_i(V)=m_i(V^1)$ for all $i$. \qed
\end{enumerate}
\end{proof}

\begin{theorem}\label{theorem RP}
  Let $\Sigma$ be a combinatorial $d$-manifold with $d\leq3$ and let $f:\Sigma_0 \to \mathbb{R}$ be an injective function.  Then, there exists a discrete gradient vector field $V$ on $\Sigma$ that is RP with respect to $f$.
\end{theorem}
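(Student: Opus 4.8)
The plan is to build $V$ one lower star at a time and then check that the local pieces assemble into a single acyclic gradient with the prescribed critical counts. The key starting remark is that relative perfectness is a genuinely local condition. For $l=f(v)$ one has $\Sigma^l\setminus\Sigma^{l'}=\sta^-(v)$ (exactly as in the proof of Theorem~\ref{proposition-corrispondenza}), so by Equation~(\ref{brehm}) and Lemma~\ref{equation} the target value is
\[
\beta_i(\Sigma^l,\Sigma^{l'})=\beta_i(\overline{\sta^-(v)},\link^-(v))=\tilde\beta_{i-1}(\link^-(v)).
\]
Thus it suffices to produce, for each vertex $v$, a matching that pairs only simplices of $\sta^-(v)$ (those having $v$ as top vertex) and leaves exactly $\tilde\beta_{i-1}(\link^-(v))$ critical $i$-simplices there.

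First I would reduce the construction on a lower star to a construction on the lower link. The simplices of $\sta^-(v)$ are precisely the faces $\{v\}\cup\alpha$ with $\alpha\in\link^-(v)\cup\{\emptyset\}$, and $\alpha\mapsto\{v\}\cup\alpha$ is a dimension-raising order isomorphism onto them. Hence any discrete gradient $W$ on $L:=\link^-(v)$ lifts to a matching on $\sta^-(v)$ by sending each pair $(\alpha,\beta)\in W$ to $(\{v\}\cup\alpha,\{v\}\cup\beta)$, under which a critical $j$-simplex of $W$ becomes a critical $(j+1)$-simplex of $\sta^-(v)$. To absorb the reduced-homology shift in degree $0$, I additionally pair the apex $\{v\}$ with the edge $\{v\}\cup\{w\}$ for one chosen critical vertex $w$ of $W$ (possible exactly when $L\neq\emptyset$). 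A direct count then yields exactly $\tilde\beta_{i-1}(L)$ critical $i$-simplices whenever $W$ is \emph{perfect}, i.e.\ $m_j(W)=\beta_j(L)$: the apex handles $i=0$, the extra pair removes one critical edge to give $\tilde\beta_0(L)=\beta_0(L)-1$ in degree $1$, and the lifted critical cells give $\beta_{i-1}(L)$ in degrees $i\ge 2$.

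The only remaining input is a perfect gradient on $L$, and this is where the hypothesis $d\le 3$ is used. Since $\Sigma$ is a combinatorial $d$-manifold, $|\link(v)|\cong S^{d-1}$, and for $d\le 3$ one has $S^{d-1}\subseteq S^2$; as $L=\link^-(v)$ is a subcomplex of $\link(v)$, its underlying space lies in $S^2$. Lemma~\ref{pgvf} therefore supplies a perfect discrete gradient $W$ on $L$, completing the local construction (including the degenerate cases: $L=\emptyset$ at a minimum gives the single critical vertex $\{v\}$, while $L=\link(v)\cong S^2$ at a maximum gives a single critical top simplex).

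Finally I would glue the local fields into a global $V$ and verify it is a genuine discrete gradient. Local acyclicity follows from acyclicity of $W$ together with the cone structure: the apex $\{v\}$ lies strictly below every other simplex of $\sta^-(v)$ in the face order and is matched to a previously critical vertex, so no lifted $V$-path can return to it. For global acyclicity, every pair of $V$ joins two simplices with the same top vertex, so $f_{max}$ is constant on pairs; consequently along any $V$-path $f_{max}$ is non-increasing and strictly decreases whenever the path leaves a lower star. A closed $V$-path would force $f_{max}$ to be constant, hence the whole path to lie in a single $\sta^-(v)$, contradicting local acyclicity. I expect this gluing and acyclicity verification, rather than the counting, to be the main technical obstacle, since it is where the independently constructed local fields must be shown to interact without creating closed paths.
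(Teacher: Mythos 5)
Your proposal follows essentially the same route as the paper: decompose $\Sigma$ into the disjoint lower stars, use Lemma~\ref{pgvf} to get a perfect gradient $W$ on $\link^-(v)\subseteq S^2$, cone it up to $\sta^-(v)$ (pairing the apex with one critical vertex), and match the resulting critical counts against $\beta_i(\overline{\sta^-(v)},\link^-(v))=\tilde\beta_{i-1}(\link^-(v))$. Your explicit verification of global acyclicity via the monotonicity of $f_{max}$ along $V$-paths is a point the paper passes over quickly, but it is a detail of the same argument rather than a different approach.
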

We give a proof of Theorem \ref{theorem RP} for $d=3$.  The proof for $d\leq2$ is similar to the case for $d=3$.
\begin{proof}
  Since $f$ is injective,  $\sta^{-}(v)\cap \sta^{-}(v')=\emptyset$ for any $v\neq v' \in \Sigma_0$.  Hence, $\Sigma$ can be constructed as a disjoint union of lower stars, that is,  $\Sigma= \coprod_{v\in \Sigma_0}\sta^{-}(v)$.  Since $\Sigma$ is a combinatorial $3$-manifold,  $|\link^-(v)|\subseteq S^2$.  If $\link^{-}(v)=\emptyset$, then $\sta^{-}(v)=\{v\}$ and any discrete gradient vector field $W$ on $\sta^{-}(v)$ admits exactly one critical simplex, which is the $0$-simplex $v$.  So, $W$ is a perfect discrete gradient vector field on $\sta^{-}(v)$.  Assume that $\link^{-}(v)\neq \emptyset$.  By Lemma \ref{pgvf}, $\link^-(v)$ admits a perfect discrete gradient vector field $W$. By Lemma \ref{equation}, 	$H_{i+1}(\overline{\sta^{-}(v)}, \link^{-}(v))\cong \tilde{H}_{i}(\link^{-}(v)) \text{ \ for \ } i\geq0.$
  Moreover, because $\tilde{H}_{0}(\overline{\sta^{-}(v)})=0$, the long exact sequence used in Lemma \ref{equation} implies that
  \begin{eqnarray*}
    H_{0}(\overline{\sta^{-}(v)}, \link^{-}(v))\cong	\tilde{H}_0(\overline{\sta^{-}(v)}, \link^{-}(v))=0.
  \end{eqnarray*}
  So,
  \begin{align*}
  &\beta_{i+1}(\overline{\sta^{-}(v)}, \link^{-}(v))= \beta_{i}( \link^{-}(v)) \text{ for  $i>0$,}\\
  &\beta_{1}(\overline{\sta^{-}(v)}, \link^{-}(v))=\beta_{0}( \link^{-}(v))-1,\\ &\beta_{0}(\overline{\sta^{-}(v)}, \link^{-}(v))=0.
\end{align*}
  Since $W$ is a perfect discrete gradient vector field, 	$m_i(W) = \beta_i(\link^{-}(v)) \text{ for } i=0,1,2.$

  Let $n_i$ denote the number of $i$-simplices in $\link^{-}(v)$ for $i=0,1,2$, and $n'_i$ denote the number of $i$-simplices in $\sta^{-}(v)$ for $i=0,1,2,3$.  Since there is a 1-to-1 correspondence between the $i$-simplices in $\link^{-}(v)$ and $(i+1)$-simplices in $\sta^{-}(v)$, $n'_0=1, n'_1=n_0, n'_2=n_1 $ and $n'_3=n_2$.

  Now, we construct a vector field $W'$ on $\sta^{-}(v)$ as follows.
  \begin{itemize}
    \item If $(\alpha, \beta)\in W$, then we set $(v\alpha, v\beta)\in W'$, where, given a simplex $\sigma$, $v\sigma$ denotes the simplex spanned by $v$ and the vertices of $\sigma$.
    \item If $\gamma$ is a discrete critical $i$-simplex of $W$ with $i>0$, then we set $v\gamma$ as critical for $W'$.
    \item If $\gamma_1, \gamma_2, \dots, \gamma_{m_0(W)}$ are the discrete critical $0$-simplices of $W$, then we set $(v, v\gamma_1)\in W'$ and $v\gamma_2, \dots, v\gamma_{m_0(W)}$ as critical for $W'$.
  \end{itemize}
  Since $W$ is a discrete gradient vector field, it does not admit any closed path. By construction, $W'$ does not admit any closed path. Hence, $W'$ is a discrete gradient vector field on $\sta^{-}(v)$. The numbers of discrete critical simplices of $W'$ are as follows:
  \begin{align*}
  m_{0}(W')&=0=\beta_0((\overline{\sta^{-}(v)}, \link^{-}(v))),\\
  m_{1}(W')&=m_{0}(W)-1=\beta_{0}( \link^{-}(v))-1\\&=\beta_1((\overline{\sta^{-}(v)}, \link^{-}(v))),\\
  m_{i}(W')&=m_{i-1}(W)=\beta_{i-1}( \link^{-}(v))\\&=\beta_i((\overline{\sta^{-}(v)}, \link^{-}(v))) \text{\ for \ } i=2,3.
\end{align*}
  Let $V$ be the collection of all discrete gradient vector fields $W'$ on $\sta^{-}(v)$ for each $v$. Since $\Sigma= \coprod_{v\in \Sigma_0}\sta^{-}(v)$, $V$ is a discrete gradient vector field on $\Sigma$ whose restriction to each $\sta^{-}(v)$ is $W'$.  Let $f(v)=l$.  Since $\Sigma^{l}\setminus\Sigma^{l'}= \sta^{-}(v)$ and  $\Sigma^{l'}=\Sigma^{l} \setminus \sta^{-}(v)$,
  in accordance with Equation (\ref{brehm}), we get $m^l_i(V) = m_i(W')=\beta_i((\overline{\sta^{-}(v)}, \link^{-}(v)))=\beta_{i}(\Sigma^l, \Sigma^{l'})$.
  Thus, $V$ is a discrete gradient vector field on $\Sigma$ that is RP with respect to $f$.
\qed \end{proof}
An example for the algorithm given in the proof of the Theorem \ref{theorem RP} is given in Figure \ref{fig:correspondence}.  In Figure \ref{fig:correspondence}$(a)$, we depict a portion of an RP discrete gradient vector field given on a combinatorial $2$-manifold $\Sigma$ and obtained according to the Theorem \ref{theorem RP}. In Figure \ref{fig:correspondence}$(b)$, we illustrate how to obtain the discrete gradient vector field $W'$ on the lower star, $\sta^{-}(v)$, of the vertex $v:=f^{-1}(5)$, pictorially.  A discrete gradient vector field $W$ (highlighted in blue) on $\link^{-}(v)$ consists of the pair $(w, uw)$ and the discrete critical $0$-simplices $u$ and $z$, where by $u$, $w$, $z$ we denote $f^{-1}(1)$, $f^{-1}(2)$, $f^{-1}(3)$, respectively. To obtain $W'$ from $W$, we pair the $1$-simplex $vw$ with the $2$-simplex $uvw$, the $0$-simplex $v$ with the $1$-simplex $uv$ and we set the $1$-simplex $vz$ as critical. In Figure \ref{fig:correspondence}$(c)$, we depict a portion of an arbitrary discrete gradient vector field on $\Sigma$ and we show that the discrete gradient vector field is not RP because $v$ is a discrete critical $0$-simplex of $\Sigma^5\setminus \Sigma^3$, that is $m^5_0=1$, but $\beta_0(\Sigma^5, \Sigma^{3})=0$.

By combining Theorem \ref{theorem RP} with Corollary \ref{cor:arbitraryDIM}, we get the following result.
\begin{corollary}\label{cor:lowDIM}
  Let $\Sigma$ be a combinatorial $d$-manifold with $d\leq3$ and let $f:\Sigma_0 \to \mathbb{R}$ be an injective function.
  Then, there exists a discrete gradient vector field $V$ on $\Sigma$ (RP w.r.t. $f$) such that there is a 1-to-$k_i$ correspondence between PL critical points of index $i$ and multiplicity $k_i$ of $f$  and discrete critical $i$-simplices $\sigma$ of $V$ such that  $f_{max}(\sigma)=f(v)$. If $f$ is PL Morse, then the correspondence is bijective.
\end{corollary}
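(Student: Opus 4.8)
The plan is to obtain the statement as an immediate consequence of the two main results already established, assembling them in the right order with no new argument required. First I would invoke Theorem \ref{theorem RP}: since $\Sigma$ is a combinatorial $d$-manifold with $d \le 3$ and $f$ is injective on $\Sigma_0$, that theorem guarantees the existence of a discrete gradient vector field $V$ on $\Sigma$ that is relatively perfect with respect to $f$. This is the only step where the dimension restriction $d \le 3$ is used, and it is where all the genuine work resides: the algorithmic construction in the proof of Theorem \ref{theorem RP} builds $V$ one lower star at a time, relying on Lemma \ref{pgvf} to produce a perfect discrete gradient on each lower link, whose underlying space embeds in $S^2$ precisely because $d \le 3$.

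Second, with such an RP field $V$ in hand, I would apply Corollary \ref{cor:arbitraryDIM}, which holds in arbitrary dimension and whose sole hypothesis is exactly the relative perfectness of $V$ with respect to $f$. It delivers the desired 1-to-$k_i$ correspondence between PL critical points of index $i$ and multiplicity $k_i$ of $f$ and the discrete critical $i$-simplices $\sigma$ of $V$ satisfying $f_{max}(\sigma) = f(v)$, and it simultaneously yields bijectivity when $f$ is PL Morse; in that case every PL critical point has multiplicity $1$, so each relevant $k_i$ lies in $\{0,1\}$ and the 1-to-$k_i$ correspondence collapses to a genuine bijection.

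Since both ingredients are already proved, there is no substantive obstacle to surmount here: the content of the corollary is entirely the observation that the existence result of Theorem \ref{theorem RP} supplies precisely the input that Corollary \ref{cor:arbitraryDIM} demands. The one compatibility point worth checking is that the $V$ produced by Theorem \ref{theorem RP} is a single discrete gradient vector field defined on all of $\Sigma$, rather than merely a family of fields on the individual lower stars; this is guaranteed by the final assembly step in the proof of Theorem \ref{theorem RP}, which uses the decomposition $\Sigma = \coprod_{v \in \Sigma_0} \sta^{-}(v)$ to glue the local fields $W'$ into a global $V$ whose restriction to each $\sta^{-}(v)$ recovers $W'$.
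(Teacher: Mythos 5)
Your proposal is correct and follows exactly the paper's own route: the paper derives Corollary \ref{cor:lowDIM} precisely by combining Theorem \ref{theorem RP} (existence of an RP discrete gradient vector field for $d\leq 3$) with Corollary \ref{cor:arbitraryDIM} (the 1-to-$k_i$ correspondence for any RP field). Your additional remarks on where the dimension restriction enters and on the global assembly of $V$ are accurate elaborations, not deviations.
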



\section{Conclusions and future developments}\label{sec:conclusions}

In this paper, we have studied the link between PL critical points and discrete critical simplices.
Our investigation has identified a condition (that can always be met in the case of domains of dimension lower than or equal to 3) under which a correspondence between the two critical sets exists. Moreover, if the function is PL Morse, the retrieved correspondence is a bijection.
Our results can serve as a theoretical  guarantee to practitioners of Morse theory in applications about the gains and the losses determined by chosing one or the other of the PL and discrete Morse theory. In practice, we ensure that in low dimensions there is no loss of information switching between the two in terms of either size or localization of critical sets.

A number of interesting questions remain open, which deserve further attention from us in the near future:
\begin{itemize}
\item In first place, we plan to understand the relationships among the PL and combinatorial notions of some interesting cellular decompositions associated to a Morse function, such as the ones given by the ascending and descending regions (also known as unstable and stable regions) and the Morse-Smale complexes. To the best of our knowledge, precise definitions of such regions are missing for either one or the other of the two theories, and similarly for some results such as the Quadrangle  Lemma \cite{Edelsbrunner03}.
\item In second place, it would be interesting to define precisely a procedure that allows us to unfold any non-simple PL singularity to obtain a Morse function. Studying such unfoldings for the PL and combinatorial setting should be very useful in applications as an approximation of the well known smooth bifurcation theory.
\item It is known that steepest descent PL flows can merge and fork even on a surface. In discrete Morse theory, their analogues, that are $V$-paths, cannot be involved in flows which merge and fork. Can this be used to simplify algorithms in PL Morse theory?
\item It is known that discrete Morse functions mirror the behaviour of  smooth Morse functions in terms of Morse vectors, that are arrays whose $i^{th}$ element is given by the number of critical points of index $i$ \cite{Benedetti12}. Indeed,  up to barycentric subdivisions, if a smooth manifold admits a Morse vector $\mathbf c$ then there is a discrete Morse function with the same Morse vector. This is true in any dimension. Can we leverage this result to say something more about relative perfectness beside dimension 3?
\end{itemize}

\section*{Acknowledgments}
The first author acknowledges the support from the Italian MIUR Award ``Dipartimento di Eccellenza 2018-2022" - CUP: E11G18000350001 and the SmartData@PoliTO center for Big Data and Machine Learning.

\bibliographystyle{elsarticle-num}
\bibliography{PLdiscreteBiblio}

\end{document}